\newcommand{\R}{\mathbb{R}}
\newcommand{\C}{\mathbb{C}}
\newcommand{\Cp}{\mathcal C}
\newtheorem{theorem}{Theorem}[section]
\newtheorem{lemma}[theorem]{Lemma}
\newtheorem{definition}[theorem]{Definition}
\def\cprime{$'$}
\begin{document}

\title{Implicit dose-response curves}

\author{Mercedes P\'erez Mill\'an and Alicia Dickenstein}
\address{
    MPM and AD: Dto.\ de Matem\'atica, FCEN, Universidad de Buenos Aires, 
    Ciudad Universitaria, Pab.\ I, C1428EGA Buenos Aires, Argentina.  MPM: 
    Dto. de Ciencias Exactas, CBC, Universidad de Buenos Aires, Ramos 
    Mej\'{i}a 841, C1405CAE Buenos Aires, Argentina. AD: 
    IMAS - CONICET, Ciudad Universitaria, 
Pab.\ I, C1428EGA Buenos Aires, Argentina}
\email{mpmillan@dm.uba.ar, alidick@dm.uba.ar}
\thanks{The authors would like to thank the anonymous reviewers for their valuable suggestions. This work was 
  partially supported by UBACYT  20020100100242, CONICET PIP 11220110100580 and ANPCyT 2008-0902, Argentina. }



\begin{abstract}
We develop tools from computational algebraic geometry for the study of steady state features of
autonomous polynomial dynamical systems via elimination of variables.  In particular, 
we obtain nontrivial bounds for the steady state concentration of a given species in biochemical 
reaction networks with mass-action kinetics. This species is understood as the output of the network 
and we thus bound the \emph{maximal response} of the system.  The improved bounds
give smaller starting boxes to launch numerical methods. We apply our results to the sequential
enzymatic network studied in \citep{markevich} to find nontrivial upper bounds for the different
substrate concentrations at steady state.

Our approach does not require any simulation, analytical expression to describe the output in terms 
of the input, or the absence of multistationarity. Instead, we show how to extract information from 
effectively computable implicit dose-response curves, with the  use of resultants and discriminants. 
We moreover illustrate in the application to an enzymatic network, the relation between the {exact}
implicit dose-response curve we obtain symbolically and the standard hysteresis diagram provided by
a numerical ode solver.

The setting and tools we propose could yield many other results adapted to any autonomous polynomial 
dynamical system, beyond those where it is possible to get explicit expressions.

  \vskip 0.1cm
  \noindent \textbf{Keywords:} chemical reaction networks, steady states, bounds, resultants, maximal response
  
\end{abstract}

\maketitle

\section{Introduction}

Consider an autonomous polynomial dynamical system
\begin{equation}\label{eq:dx}
\dfrac{d}{dt}x(t)=f(x(t)) 
\end{equation}
where $x=(x_1,\dots,x_s)$ and $t$ are real variables, and each coordinate $f_i$ 
is a polynomial in $x_1, \dots, x_s$ with real coefficients. The \emph{steady states} 
of~\eqref{eq:dx} are thus the  real zeros of the algebraic variety defined by 
$f_1(x) = \dots =f_s(x)= 0$. An important example of these systems are chemical 
reaction networks with {\em mass-action kinetics}, which
have been extensively studied on a mathematical basis since the foundational
work by Feinberg \citep{feinberg}, Horn and Jackson \citep{hj72} and Vol{\cprime}pert \citep{vh85}. 
In this case, 
$x_1,x_2,\ldots,x_s$ represent species concentrations, considered as functions of 
time $t$ and the meaningful steady states are those with nonnegative coordinates. 
We will mainly use the terminology of chemical reaction networks throughout 
and consider nonnegative $x_i$.

Any  linear relation (with real coefficients) among the polynomials $f_1, \dots, f_s$ 
defines a conservation relation of the form 
\begin{equation}\label{eq:L}
 L(x) \, = \,  \ell(x) - b \, = \, 0,
\end{equation}
where $\ell$ is a homogeneous linear form in the variables $x_1, \dots, x_s$ and the 
constant $b = \ell(x^*) \in \mathbb{R}$ 
is determined by the initial values $x^*=x(0)$ of the system.

\

\begin{definition}  \label{def:trivial} 
We say that $b> 0$ is a  \emph{trivial upper bound} for the $i$th species if there exists a conservation relation  
$a_1 x_1+ a_2 x_2+ \dots +x_i + \dots + a_s x_s - b = 0$ with  all $a_j\geq 0$.
\end{definition}

\
In the particular important case of conservative networks, there are trivial upper 
bounds for the concentrations of all the species.
Note that in the conditions of Definition~\ref{def:trivial}, $b$ is an upper bound for the 
concentration of $x_i$ \emph{along the whole trajectory} in $\R_{\ge 0}^s$.
Our main goal is to improve these bounds for steady state 
concentrations of specific species of the system (that we will call {\it output}).
It is important to notice that, in general, there is no analytical expression to describe these concentrations 
and there could be multistationarity, which makes finding these bounds a difficult task. 

In the special bacterial EnvZ/OmpR osmolarity regulator, algebraic methods are used in \cite{kpmddg} 
to detect the existence of robust upper bounds at steady state, i.e., bounds that depend only on the 
reaction constants and not on the initial conditions or the total concentration of the species. 
Multistationarity in enzymatic networks has been studied with geometric and algebraic tools for example
in~\cite{fw12,fhc13,pdsc12,ws08}. A particular case of our approach has been studied in \cite{fkaw12} 
for signaling cascades with $n$ layers and one 
post-trans\-la\-tional modification cycle at each layer. A nontrivial bound for the maximal response 
of the modified substrate in the $n$-th layer can be read from a polynomial involving its concentration and 
the total amount of the first modification enzyme, which has degree
one in this second variable. This is the simplest case in our analysis, which is then reduced 
to studying the zeros of the leading coefficient. The authors  also present a deeper study of the bounds  by 
tracing back the values of the modified substrate in the $n$-th layer  which can be completed to a positive steady 
state of the whole system.

We consider for instance the steady state concentration of $x_1$ as
our {\it output} and the constant term  $c$ of
a particular conservation relation~\eqref{eq:L} as our 
{\it input}. In the chemical reaction network 
setting, $c$ usually stands for a total
concentration. We will find with methods of computational algebraic geometry 
--under natural hypotheses-- an implicit polynomial relation 
$p(c,x_1)=0$ between
the values of $x_1$ at steady state and $c$.
Note that in case of multistationarity, there will be several
$x_1$ satisfying this equation for the same value of the input $c$.
Assuming there is a trivial upper bound $b$, one can consider $c$ as the constant term
of a conservation relation linearly independent of the one giving $b$. If one is able to plot
the curve ${\mathcal C} = \{(c,x_1) ~|~ p(c,x_1)=0\}$, then an upper bound for the values of $x_1$ at steady state
can be read from this plotting. However, an implicit plot has in general bad quality and is
inaccurate. Instead, we appeal to the properties of resultants and discriminants to preview
a ``box'' containing the intersection of $\mathcal C$ with the first orthant in the plane
$(c,x_1)$. In fact, these tools are usually applied to produce the approximate
implicit plotting. The improved bounds give smaller starting boxes to launch numerical computations.
We will call $\mathcal C$ an {\em implicit dose-response curve}.
These implicit dose-response curves can also be used --via implicit dif\-feren\-tiation-- to study the
{\em sensitivities} of the local variation of $x_1$ around $c^*$ as a function of $c$ when $p(c^*,x_1^*)=0, 
\frac {\partial p} {\partial x_1}(c^*, x_1^*) \neq 0$, without an explicit
expression for the local function $x_1=x_1(c)$ in a neighborhood of $(c^*,x_1^*)$ in $\mathcal C$.

The approach we propose could yield many similar results.
As an application, we consider the mass-action system 
in \cite{markevich}
for the sequential double-phospho\-rylation enzymatic mechanism, 
which can give rise to multistationarity:

 \begin{tiny}
\begin{equation}\label{eq:red} 
\begin{split}
    \text{M} + \text{MAPKK}
    \underset{k_{-1}}{\overset{k_1}{\rightleftarrows}}
    \text{M-MAPKK} 
    \overset{k_2}{\rightarrow}
    \text{Mp} + \text{MAPKK}
     \underset{k_{-3}}{\overset{k_3}{\rightleftarrows}}
    \text{Mp-MAPKK}
    \overset{k_4}{\rightarrow}
    \text{Mpp} + \text{MAPKK} \\
    \text{Mpp} + \text{MKP3}
    \underset{h_{-1}}{\overset{h_1}{\rightleftarrows}}
    \text{Mpp-MKP3}
    \overset{h_2}{\rightarrow}
    \text{Mp-MKP3}
    \underset{h_{-3}}{\overset{h_3}{\rightleftarrows}}
    \text{Mp + MKP3}
    \underset{h_{-4}}{\overset{h_4}{\rightleftarrows}}
    \text{Mp-MKP3}^*
    \overset{h_5}{\rightarrow}
    \text{M-MKP3}
    \underset{h_{-6}}{\overset{h_6}{\rightleftarrows}}
    \text{M} + \text{MKP3}\\
\end{split}
\end{equation}
\end{tiny}
We feature the system in the form~\eqref{eq:dx} in \S~\ref{ssec:eq}.
There are eleven variables given by the concentrations of the eleven chemical species:
the unphosphorylated substrate M, the singly phosphorylated substrate Mp and the
doubly phosphorylated substrate Mpp, the two enzymes (the kinase MAPKK and the
phosphatase MKP3) plus the six intermediate species.
There are three independent conservation relations (also translated to $x_i$ variables
in \S~\ref{ssec:eq}):
{\tiny
\begin{align*}
[\text{M-MAPKK}]+[\text{Mp-MAPKK}]+[\text{MAPKK}]  - \text{MAPKK}_{tot} &  = 0,\\
[\text{Mpp-MKP3}]+[\text{Mp-MKP3}]+[\text{Mp-MKP3}^*]+[\text{M-MKP3}]+[\text{MKP3}]  - \text{MKP3}_{tot} &=0,\\
[\text{M}]+[\text{Mp}]+[\text{Mpp}]+[\text{M-MAPKK}]+[\text{Mp-MAPKK}]+[\text{Mpp-MKP3}] & +\\
+ [\text{Mp-MKP3}]+[\text{Mp-MKP3}^*]+[\text{M-MKP3}]  - \text{M}_{tot}& = 0.
\end{align*}
}
The usual output of this network is the concentration $x_1=$[Mpp] of the doubly phosphorylated 
substrate Mpp. Consider as an {\it input} of this network the total amount $c=$MAPKK$_{tot}$ 
related to the kinase MAPKK.  We easily deduce from the third conservation relation that 
$b=\text{M}_{tot}$ is a trivial upper bound for [Mpp] along the whole trajectory.
We find nontrivial bounds for this species at steady state, which are also 
{\em independent} of the input value. 
Our analysis shows how to ``regulate'' the parameters 
of the system in a 
more explicit way than simply running a simulation of the complete system.

We give in Section~\ref{sec:results} sufficient conditions to 
find nontrivial upper bounds by using tools from computational algebraic geometry, 
in particular variable elimination and 
the notion of discriminant \citep{GKZ}. Our main theoretical results are 
summarized in Theorem~\ref{thm:main}.
We then apply in Section~\ref{sec:4} our results to show nontrivial bounds for the 
concentration of the doubly-phosphorylated substrate in the 
sequential double-phosphorylation system presented in \cite{markevich},
showing how to exploit the implicit dependencies obtained with a computer algebra system.
We moreover point out the relation of the implicit dose-response curve $\mathcal C$
with the hysteresis graphs interpolated by numerical ode solvers. An appendix contains the
proofs of the theoretical results.

\section{Methods and results}\label{sec:results}

Our main result is Theorem~\ref{thm:main}, which can be seen as a sample statement, in the following sense:
there are
many other similar results which could be proved with the tools we present, 
adapted to different families of autonomous polynomial dynamical systems.

We assume the dimension $r$ of the space of the homogeneous linear forms defining
conservation relations is positive, and take a basis $\ell_1, \ell_2, \dots,\ell_r$ of this subspace. 
In the context of chemical reaction systems,
the linear equations defining the so called stoichiometric subspace give in general all
the conservation relations \citep{feho77}.
We will consider the constant term $c=b_1$ of $\ell_1$ 
as our {\em input} and one of the $x$-variables, say $x_1$,
as our {\em output}.

We will look for \emph{steady state invariants} which are 
\emph{polynomial consequences} of the equations 
 \begin{equation} \label{eq:efes}
f_1=f_2= \dots =f_s=\ell_1-c=\ell_2 - b_2= \dots=\ell_r - b_r=0,
\end{equation}
that we will use to detect properties of the concentrations at steady state.
So, we will not only look for linear combinations of our equations with real number
coefficients, but also
with real polynomial coefficients. This is made precise in the definition of the ideal 
 $I$ generated by 
$f_1, f_2, \dots, f_s, \ell_1 - c, \ell_2 - b_2, \dots, \ell_r - b_r$
in the polynomial ring ${\mathbb R}[c,x_1\dots, x_s]$:
\[
I = \left\lbrace \overset{s}{\underset{j=1}{\displaystyle \sum}} g_jf_j+
g_{s+1}(\ell_1 - c)+\overset{r}{\underset{k=2}{\displaystyle \sum}} g_{s+k}(\ell_k- b_k)\right\rbrace,
\]
where  $g_1, \dots, g_{s+r}$ are polynomials in the variables
$c, x_1, \dots, x_s$. For a chemical reaction system, the real nonnegative common zero set 
of all the polynomials in $I$ coincides with the 
steady states in the \emph{stoichiometric compatibility class} determined by $c, b_2, \dots, b_r$.
We refer the reader to the nice book \cite{IVA} for a basic introduction to the concepts and tools
from computational algebraic geometry we use. The proofs of our results can be found in the Appendix.

\begin{lemma}\label{lem:p}
With the previous notations, assume that system~\eqref{eq:efes}
has finitely many complex solutions $(x_1, \dots,x_s)$ for any value of $c$.
Then, it is possible to construct a nonzero polynomial 
$p= p(c,x_1)$  in $I$ only depending on $x_1$ and $c$ and with positive degree in $x_1$.
\end{lemma}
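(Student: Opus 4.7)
The plan is to prove the result through complex elimination theory. Since the generators of $I$ have real coefficients, if we produce a polynomial $p(c,x_1) \in I^{\C} := I \cdot \C[c,x_1,\dots,x_s]$ with $\deg_{x_1}(p) > 0$, then splitting $p = p_1 + i p_2$ with $p_j \in \R[c,x_1]$ yields two elements of $I \cap \R[c,x_1]$, at least one of which still has positive degree in $x_1$. So I would immediately pass to $\C$ and work there.

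The heart of the argument is a dimension count combined with the closure theorem. Let $V = V(I^\C) \subset \C^{s+1}$ and let $\pi\colon \C^{s+1} \to \C^2$ be the projection $(c,x_1,\dots,x_s) \mapsto (c,x_1)$. The closure theorem (see, e.g., Chapter~3 of \cite{IVA}) identifies the vanishing locus of the elimination ideal $I^\C \cap \C[c,x_1]$ with the Zariski closure $\overline{\pi(V)}$. The hypothesis that system~\eqref{eq:efes} has only finitely many complex solutions for each fixed $c$ says precisely that every fibre of the further projection $V \to \C_c$ is finite; hence $\dim V \le 1$, and therefore $\dim \overline{\pi(V)} \le 1 < 2$. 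In particular $\overline{\pi(V)}$ is a proper subvariety of $\C^2$ and $I^\C \cap \C[c,x_1]$ contains some nonzero polynomial $p$.

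It remains to force $\deg_{x_1}(p) > 0$. Assume for contradiction that every element of $I^\C \cap \C[c,x_1]$ lies in $\C[c]$. Then $V(I^\C \cap \C[c,x_1])$ is a finite union of vertical lines $\{c = c_j\}$, so the $c$-coordinates of points of $V$ are drawn from a finite set $\{c_1,\dots,c_m\}$; combined with the finite-fibre hypothesis, this forces $V$ to be finite. If $V = \emptyset$, the Nullstellensatz gives $I^\C = (1)$, so $x_1 \in I^\C \cap \C[c,x_1]$, contradicting the assumption. Otherwise, letting $\alpha_1,\dots,\alpha_N$ be the distinct $x_1$-coordinates of points of $V$, the polynomial $\prod_{k=1}^N (x_1 - \alpha_k)$ vanishes on $V$; by Hilbert's Nullstellensatz some positive power of it lies in $I^\C$, which is a polynomial in $\C[x_1]$ of positive $x_1$-degree, again a contradiction.

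The only real work is the dimension count — reading the finite-fibre hypothesis as $\dim V \le 1$ and then bounding $\dim \overline{\pi(V)}$ via the closure theorem; the rest is bookkeeping plus the Nullstellensatz. The mild technical step that should not be overlooked is the descent from $\C$ back to $\R$, which works only because the generators of $I$ are themselves real, so the real and imaginary parts of any element of $I^\C$ lie in $I$.
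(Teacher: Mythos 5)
Your proof is correct, and its central step is the same as the paper's: both arguments read the finite-fibre hypothesis on \eqref{eq:efes} through the fibre dimension theorem to conclude that $V(I)\subset\C^{s+1}$ has dimension at most one, so its projection to the $(c,x_1)$-plane is a proper subvariety of $\C^2$ and the elimination ideal is nonzero (the paper quotes the dimension theory from \cite{shafa} where you invoke the Closure Theorem of \cite{IVA}; these are interchangeable here). Where you genuinely diverge is in securing positive degree in $x_1$. The paper first observes that $I$ is generated by $s$ polynomials in $s+1$ variables --- a structural fact about the conservation relations --- so that $\dim V(I)\ge 1$; combined with the finite fibres this forces $I\cap\R[c]=\{0\}$, and then \emph{any} nonzero element of $I\cap\R[c,x_1]$ automatically involves $x_1$. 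You instead argue by contradiction: if the elimination ideal sat inside $\C[c]$, the $c$-coordinates on $V(I)$ would form a finite set, the finite fibres would make $V(I)$ itself finite, and the Nullstellensatz would then place a polynomial of positive degree in $x_1$ alone (or the constant $1$, hence $x_1$ itself) inside $I^\C$, contradicting the assumption. Your route costs an extra Nullstellensatz step but avoids the generator count entirely and remains valid in the degenerate situation where $V(I)$ is finite or empty, which the paper's proof does not address (there the lower bound $d\ge 1$ would fail); the paper's route is shorter once that lower bound is granted. Your closing remark on descending from $\C$ to $\R$ by taking real and imaginary parts is a detail the paper passes over silently, and you handle it correctly.
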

Such a polynomial $p$ gives an implicit relation between $x_1$ and $c$ at steady state. 
It can be computed effectively by standard elimination techniques from 
computational algebraic geometry. The hypothesis of finitely many complex solutions 
does hold in most biological examples and it is 
always {\emph{assumed tacitly}}. 
For readers with enough algebraic geometry background, we remark that in fact, 
for Lemma~\ref{lem:p} to hold, it is enough to ask the two conditions
we state in the following paragraph.

Note that we can choose $s-r$ linearly independent $f_i$'s, say $f_1,\dots, f_{s-r}$, 
and so $I$ can be generated by the $s$ polynomials 
$f_1,\dots, f_{s-r}, \ell_1-c,\dots,\ell_r-b_r$ in $s+1$ variables $c,x_1,\dots, x_s$,
as $f_{s-r+1}, \dots, f_s$ are $\R$-linear combinations of $f_1, \dots, f_{s-r}$.
So,  it holds that the dimension
of the ideal $I$ equals one for general coefficients. This is the first condition.
The second natural condition requires that there is no nonzero polynomial only
depending on $c$ lying in $I$. This means that system~\eqref{eq:efes} has a solution for
infinitely many values of $c$, which also
holds in general.

From a polynomial $p=p(c,x_1)$ as in Lemma~\ref{lem:p}, 
we can establish bounds for the steady state concentration of $x_1$. As a first step, 
for any given $c=c^*$, the $x_1$ coordinate of any steady state is a root of the univariate polynomial $p(c^*,x_1)$, 
which can be approximated or bounded in terms of its coefficients.  Note that there could be 
multistationarity for this particular value $c^*$ and we can estimate \emph{all} possible values of 
$x_1$ for any given nonnegative initial condition.

In what follows, we will present a way of getting bounds which hold for \emph{any}  meaningful value
of the input $c$. It might happen that $p$ does not depend on $c$. In this exceptional case, the $x_1$
coordinates of any steady state can only equal the (finite number of) nonnegative real
roots of $p=p(x_1)$, for any $c$. In what follows, we assume that the degree $n$ of $p$ in $c$  
is positive and write 
\begin{equation}\label{eq:p}
 p=\sum_{i=0}^n p_i(x_1)c^i, \quad p_n \neq 0.
\end{equation}
In order to understand the intersection of the first orthant with the implicit dose-response curve 
$\Cp= \{(c,x_1)~|~p(c,x_1)=0\}$,
we will use the notions of resultant and discriminant~\citep{GKZ}.
The resultant 
\vspace{-.1cm} 
\begin{equation} \label{eq:Rn}
R_{n}:= \, {\rm Res}_{n, n-1} \left(p,\frac{\partial p}{\partial c},c\right) \; \in \R[x_1], 
\end{equation}
of $p$ and $ \frac{\partial p}{\partial c}$, thought of as polynomials  in $\mathbb{R}[x_1][c]$ 
of degree $n$ and $n-1$, respectively, is a  
polynomial in the variable $x_1$ which characterizes the existence of common roots of $p(c,x_1^*)$ and its derivative
with respect to $c$,  for values $x_1^*$ of $x_1$ for which the degree of $p(c,x_1^*)$ in the variable $c$ is $n$.

Take any  
fixed $x_1^*$ such that $p_n(x_1^*) \not =0$, so that the specialized polynomial $p(c,x_1^*)$ 
has degree $n$ in $c$. The discriminant of $p(c,x_1^*)$ (with respect to $c$) depends polynomially
on $x_1^*$ and defines a polynomial $D_n  \in \R[x_1]$. By definition, $D_n(x_1^*)=0$
 if and only if there is a (complex) value of $c$ for which 
$p(c,x_1^*)=\frac{\partial p}{\partial c}(c,x_1^*) =0$.  When there exists a real solution $c^*$,  
this condition is equivalent to the fact that
the curve $\Cp$ has a tangent which is parallel to the $c$-axis at the point $(c^*,x_1^*)$.
On the other side, if the line $x_1=\alpha$ is an asymptote of the curve $\Cp$, 
that is, if there exists a sequence
$(c^{(m)}, x_1^{(m)}) \in {\mathcal C}$ with $c^{(m)} \to \infty$ and $x_1^{(m)} \to \alpha$,  then $p_n(\alpha)=0$.

We have the following characterization of the zeros of the resultant~\eqref{eq:Rn} \citep[see][chap.~12~\S~1]{GKZ}.

\begin{lemma} \label{lem:res}
The zeros of $R_n$ in the variable $x_1$ are given by the union of the roots of the leading coefficient
$p_n$ and the roots of the discriminant $D_n$ of $p$ as a polynomial in the variable $c$. 
\end{lemma}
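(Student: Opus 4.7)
The plan is to establish the polynomial identity
\begin{equation*}
R_n(x_1) \;=\; (-1)^{n(n-1)/2}\, p_n(x_1)\, D_n(x_1) \qquad \text{in } \R[x_1],
\end{equation*}
from which the lemma follows immediately: the zero locus of a product of polynomials is the union of the zero loci of the factors, so the roots of $R_n$ are exactly the roots of $p_n$ together with the roots of $D_n$.

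To prove the identity, I would pass to the algebraic closure $K=\overline{\R(x_1)}$ and factor $p = p_n \prod_{i=1}^{n}(c-\alpha_i)$ with $\alpha_i \in K$. Since $\partial p/\partial c$ has formal degree $n-1$ with leading coefficient $n\,p_n$, the Poisson product formula for resultants with prescribed degrees yields
\begin{equation*}
R_n \;=\; p_n^{\,n-1}\prod_{i=1}^{n}\frac{\partial p}{\partial c}(\alpha_i).
\end{equation*}
Differentiating the factored form gives $\tfrac{\partial p}{\partial c}(\alpha_i) = p_n \prod_{j\neq i}(\alpha_i-\alpha_j)$, and pairing the ordered index pairs $(i,j)$ and $(j,i)$ turns the full product $\prod_{i\neq j}(\alpha_i-\alpha_j)$ into $(-1)^{n(n-1)/2}\prod_{i<j}(\alpha_i-\alpha_j)^2$. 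Combining these with the definition $D_n = p_n^{\,2n-2}\prod_{i<j}(\alpha_i-\alpha_j)^2$ of the discriminant of $p$ (viewed as a polynomial in $c$) yields exactly the identity above.

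The only delicate point is that the computation a priori lives in $K$, whereas the asserted equality is in $\R[x_1]$. This is handled by observing that both $R_n$ and $p_n D_n$ are, by construction, fixed polynomial expressions in the coefficients $p_0(x_1),\dots,p_n(x_1)\in\R[x_1]$: the resultant is the determinant of a Sylvester matrix, and the discriminant is a universal polynomial in the elementary symmetric functions of the roots; hence an equality valid after passing to $K$ is in fact an equality of elements of $\R[x_1]$. I expect the main technical hurdle to be purely bookkeeping, namely keeping track of the powers of $p_n$ and the sign $(-1)^{n(n-1)/2}$ when passing from the Poisson formula to the discriminant. Conceptually the proof is just the classical identity between the resultant of $p$ with its partial derivative and the discriminant, already cited by the authors from GKZ, Chapter 12, \S~1.
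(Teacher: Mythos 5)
Your proof is correct: the identity $R_n=(-1)^{n(n-1)/2}p_nD_n$ (with the standard normalization $D_n=p_n^{2n-2}\prod_{i<j}(\alpha_i-\alpha_j)^2$) follows exactly as you compute via the Poisson formula, and the lemma is immediate since only the zero sets matter, so even the sign and the precise power of $p_n$ are not load-bearing. The paper does not prove this lemma at all --- it only cites \citep[chap.~12~\S~1]{GKZ}, and your argument is precisely the classical resultant--discriminant identity that reference establishes.
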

 The resultant $R_{n}$ can be computed as the determinant of 
the corresponding $(2n-1) \times (2n-1)$ Sylvester matrix  (or by smaller matrices, involving the Bezoutian).

 The general framework where we could use $p$ to get nontrivial bounds for the steady
 state values of $x_1$ is the following.
 We assume that system~\eqref{eq:dx}  has a 
nonnegative conservation relation  $L =\ell-b$ as in~\eqref{eq:L},  
in which $x_1$ appears with nonzero coefficient 
and all the other coefficients in $\ell$ are nonnegative. This gives 
a trivial bound for the steady 
state value of $x_1$. We furthermore assume that $r \ge 2$ and $\ell_1$
is linearly independent from $\ell$.
We can obtain bounds for the values of $x_1$ (independent of $c$), once
the values $b_2, \dots, b_r$ of the conservation relations associated to $\ell_2, \dots, \ell_r$
have been fixed.

We give now our main result. To state it, we introduce the following notations.
For any fixed $\gamma \in {\mathbb R}$, we will denote by $\Cp_\gamma$ the intersection
of $\Cp$ with the horizontal line $\{x_1=\gamma\}$:
\begin{equation}\label{eq:Cgamma}
 \Cp_\gamma:=\{c\in \mathbb{R} ~|~ p(c,\gamma)=0\},
\end{equation}
and we denote by $J$ the image  
\[J:=\ell_1(\mathbb{R}_{\geq 0}^s)\] 
 of the nonnegative orthant by the linear form $\ell_1$.
Note that if the signs of all coefficients in $\ell_1$
 are the same, we can assume they are all nonnegative and then 
 $J=[0,+\infty)$; otherwise, $J=\mathbb{R}$.

\begin{theorem}\label{thm:main}
Consider $p=p(c,x_1) \in I$ with positive degree $n$ in $c$ such that the resultant $R_n \not\equiv 0$. 
 Let $\{\alpha_1, \alpha_2, \dots, \alpha_m\}$ be the set of real zeros of $R_{n}$, 
 with $\alpha_1 > \dots > \alpha_m$.
  If for some index $k \in \{1,\dots,m\}$ there exist $\beta_1, \dots, \beta_k \in \mathbb{R}$ with
 $$\beta_1>\alpha_1 > \beta_2 > \alpha_2 > \dots > \beta_k > \alpha_k$$ 
 such that for all $1\leq i \leq k$,
 $\Cp_{\beta_i} = \emptyset$  and  $\Cp_{\alpha_i} \cap J = \emptyset$,  then $x_1 < \alpha_k$  
 at any steady state. In other words, $\alpha_k$ is an upper bound for $x_1$ at steady state.

 Moreover, let $\alpha$ denote the biggest positive real root of $p_n$ and assume that $\alpha < \alpha_k$.
Assume  $\Cp_{\gamma} \cap J = \emptyset$ for all roots
 $\gamma$ of $R_{n}$ in the interval $[\alpha, \alpha_k]$. 
 In case $J =[0,+\infty)$, assume also that 
 the univariate polynomial $p(0,x_1)$ does not have any positive real roots bigger than $\alpha$.
 Then,  $\alpha$ is a more precise upper bound 
 for $x_1$ at steady state.
\end{theorem}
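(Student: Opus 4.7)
The plan is to combine Lemma~\ref{lem:res} with the continuity of the complex roots of a univariate polynomial as a function of its coefficients. The common observation is that on the complement in $\mathbb R$ of the finite zero set $\{\alpha_1,\dots,\alpha_m\}$ of $R_n$, the polynomial $p(c,\gamma)$ has degree exactly $n$ in $c$ and no multiple roots, so its $n$ complex roots vary continuously with $\gamma$ and, since $p$ has real coefficients, real roots and complex-conjugate pairs are only exchanged at the $\alpha_j$. Consequently the count $\gamma\mapsto \#\Cp_\gamma$ is locally constant on $\mathbb R\setminus\{\alpha_1,\dots,\alpha_m\}$.

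For the first statement I would induct on $k$. In the base case $k=1$, the interval $(\alpha_1,+\infty)$ contains no zero of $R_n$ and $\Cp_{\beta_1}=\emptyset$, so the observation gives $\Cp_\gamma=\emptyset$ for every $\gamma>\alpha_1$; any steady state with $x_1^*>\alpha_1$ would supply a real root $c^*=\ell_1(x^*)\in J\subseteq\mathbb R$ of $p(c,x_1^*)$, a contradiction. The possibility $x_1^*=\alpha_1$ is ruled out by $c^*\in J$ together with $\Cp_{\alpha_1}\cap J=\emptyset$. The inductive step is parallel: assuming $x_1^*<\alpha_{k-1}$, the same local-constancy argument on $(\alpha_k,\alpha_{k-1})$ anchored at $\beta_k$ shows $\Cp_\gamma=\emptyset$ throughout that interval, and $x_1^*=\alpha_k$ is dispatched by $\Cp_{\alpha_k}\cap J=\emptyset$.

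For the second statement I would propagate the emptiness of $\Cp_\gamma\cap J$ downward from $\gamma=\alpha_k$ to $\gamma=\alpha$. Since $\alpha$ is the biggest positive real root of $p_n$ and $\alpha<\alpha_k$, $p_n$ does not vanish on $(\alpha,+\infty)$, so on $(\alpha,\alpha_k]$ no real root of $p(c,\gamma)$ escapes to infinity and every zero of $R_n$ in this range is a zero of $D_n$; enumerate them as $\gamma_1=\alpha_k>\gamma_2>\cdots>\gamma_q=\alpha$. By the first statement we already know $\Cp_\gamma\cap J=\emptyset$ for $\gamma>\alpha_k$. On each sub-interval $(\gamma_{i+1},\gamma_i)\subset(\alpha,\alpha_k)$ the count of real roots of $p(c,\gamma)$ lying in $J$ is locally constant: in the case $J=\mathbb R$ this follows from the observation, while in the case $J=[0,+\infty)$ it additionally requires that no real root crosses the boundary $c=0$, which is guaranteed because $p(0,x_1)$ has no positive real roots larger than $\alpha$, i.e.\ $p(0,\gamma)\neq 0$ for $\gamma\in(\alpha,\alpha_k]$. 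If the constant were positive on some $(\gamma_{i+1},\gamma_i)$, then a continuous real branch $c_j(\gamma)\in J$ would have a finite limit $L$ as $\gamma\to\gamma_i^-$ (finite because $p_n(\gamma_i)\neq 0$), and $L$ would belong to $\Cp_{\gamma_i}\cap J$, contradicting the hypothesis. A downward induction starting from $\gamma_1=\alpha_k$ therefore yields $\Cp_\gamma\cap J=\emptyset$ throughout $(\alpha,\alpha_k]$, and adding $\Cp_\alpha\cap J=\emptyset$ extends the conclusion to $[\alpha,\alpha_k]$; any steady state with $x_1^*\in[\alpha,\alpha_k)$ would then give $c^*\in \Cp_{x_1^*}\cap J=\emptyset$, which together with the first statement forces $x_1^*<\alpha$.

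The delicate point is the bookkeeping at each transition $\gamma_i$: three mechanisms can in principle alter the number of real roots of $p(c,\gamma)$ in $J$ as $\gamma$ crosses a critical value---a complex-conjugate pair colliding to become a pair of real roots (or vice versa), a real root being released to or absorbed from infinity, or, when $J=[0,+\infty)$, a real root crossing the boundary $c=0$---and the hypotheses of the second statement are arranged precisely to forbid all three on $[\alpha,\alpha_k]$.
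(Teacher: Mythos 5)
Your argument is correct. The first half is essentially the paper's proof: the local constancy of $\#\Cp_\gamma$ on the connected components of the complement of the zeros of $R_n$ is exactly Lemma~\ref{lem:IFT} (which the paper establishes via the implicit function theorem plus a compactness argument rather than by continuity of the roots, with identical content), and the rest is the same interval-by-interval reading anchored at the $\beta_i$ together with the hypothesis $\Cp_{\alpha_i}\cap J=\emptyset$. For the second half you take a genuinely different route. The paper argues by contradiction at a single extremal point: it sets $X=\{x_1>\alpha : \exists\, c\in J \text{ with } p(c,x_1)=0\}$ and examines $\mu=\sup X$, where boundedness of the $c$-coordinates follows from $p_n(\mu)\neq 0$, the hypothesis on the roots of $R_n$ in $[\alpha,\alpha_k]$ forces $\frac{\partial p}{\partial c}(c^*,\mu)\neq 0$, and the IFT then produces points of $X$ above $\mu$ unless $J=[0,+\infty)$ and $c^*=0$, which the condition on $p(0,x_1)$ excludes. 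Your downward induction over the critical values $\gamma_1=\alpha_k>\dots>\gamma_q=\alpha$ instead propagates $\Cp_\gamma\cap J=\emptyset$ across each open subinterval by following a real branch to its limit at the upper endpoint; this yields the stronger intermediate statement that $\Cp_\gamma\cap J=\emptyset$ on all of $[\alpha,\alpha_k]$, and it makes fully explicit the three mechanisms (root collision, escape to infinity, crossing of $c=0$) that the hypotheses are designed to forbid, at the cost of some extra bookkeeping that the supremum argument avoids. The only point to tighten is the existence of the ``finite limit $L$'': it holds because the cluster set of a bounded continuous branch as $\gamma\to\gamma_i^-$ is connected and contained in the finite zero set of $p(\cdot,\gamma_i)$, hence a single point --- though an accumulation point would already suffice for the contradiction.
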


 We illustrate  in Section~\ref{sec:4} the improvement in the maximal response given by 
 Theorem~\ref{thm:main} in the interesting example of the sequential phosphorylation of \cite{markevich}. 
 Considering the polynomial $p$ in that section, we depict  in Figure~\ref{fig:tang} (a) 
 the curve $\Cp$ and the values of $\alpha_1, \alpha_2, \alpha_3$  (detailed in
 \S~\ref{ssec:x1c}), together with the trivial bound $500$. We also show in the adjacent image 
 (b) that the occurrence of $\alpha_2$ is due to a horizontal tangency at a point with negative
 value of $c$. For more details, see Figures~\ref{fig1},\ref{fig1234}.
\begin{figure}[ht]
\begin{tabular}{cc}
  (a) \includegraphics[scale=.23,trim=2mm 5mm 2mm 7mm, clip=true]{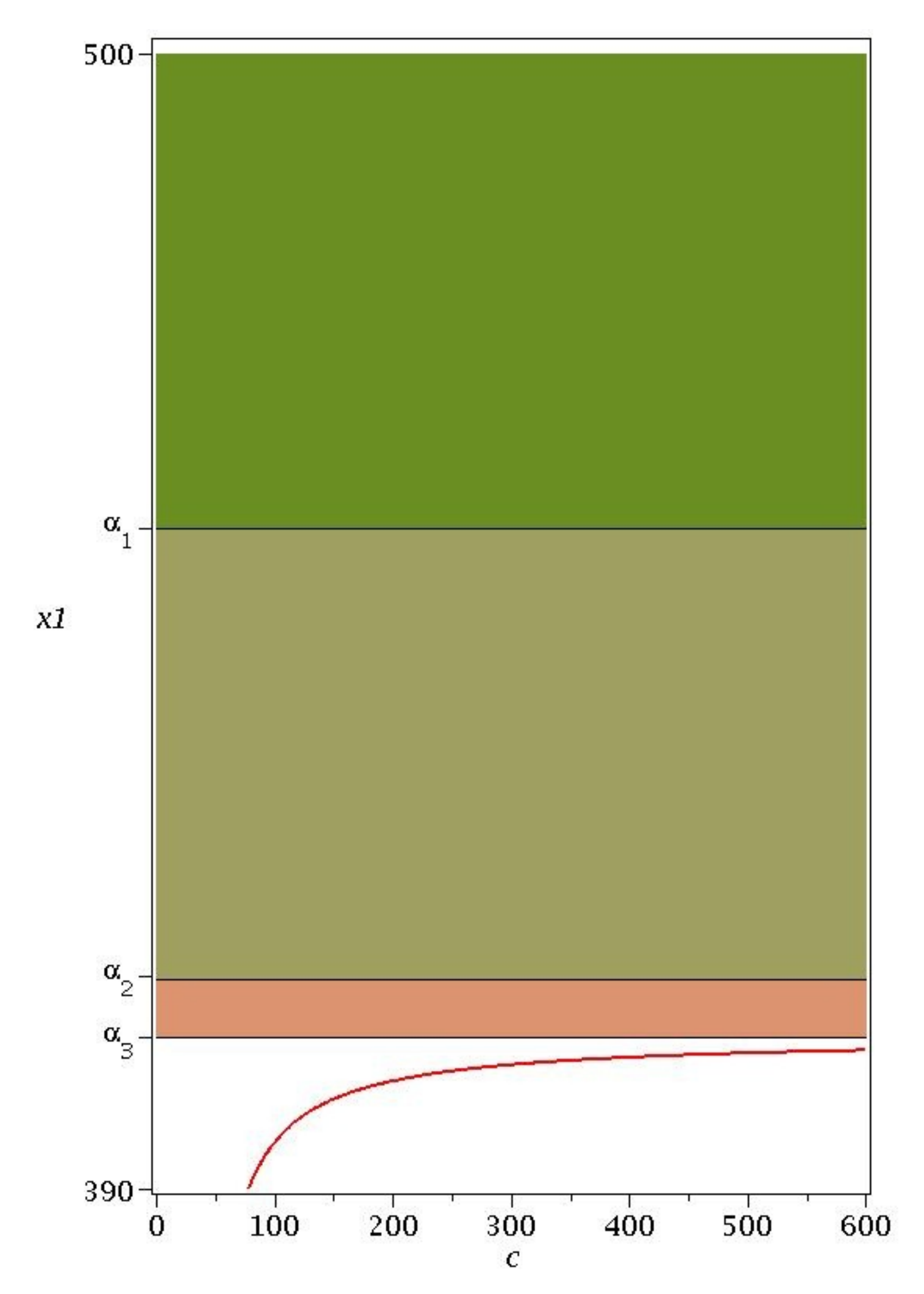} &
  (b)\includegraphics[scale=.26,trim=2mm 5cm 2mm 5cm, clip=true]{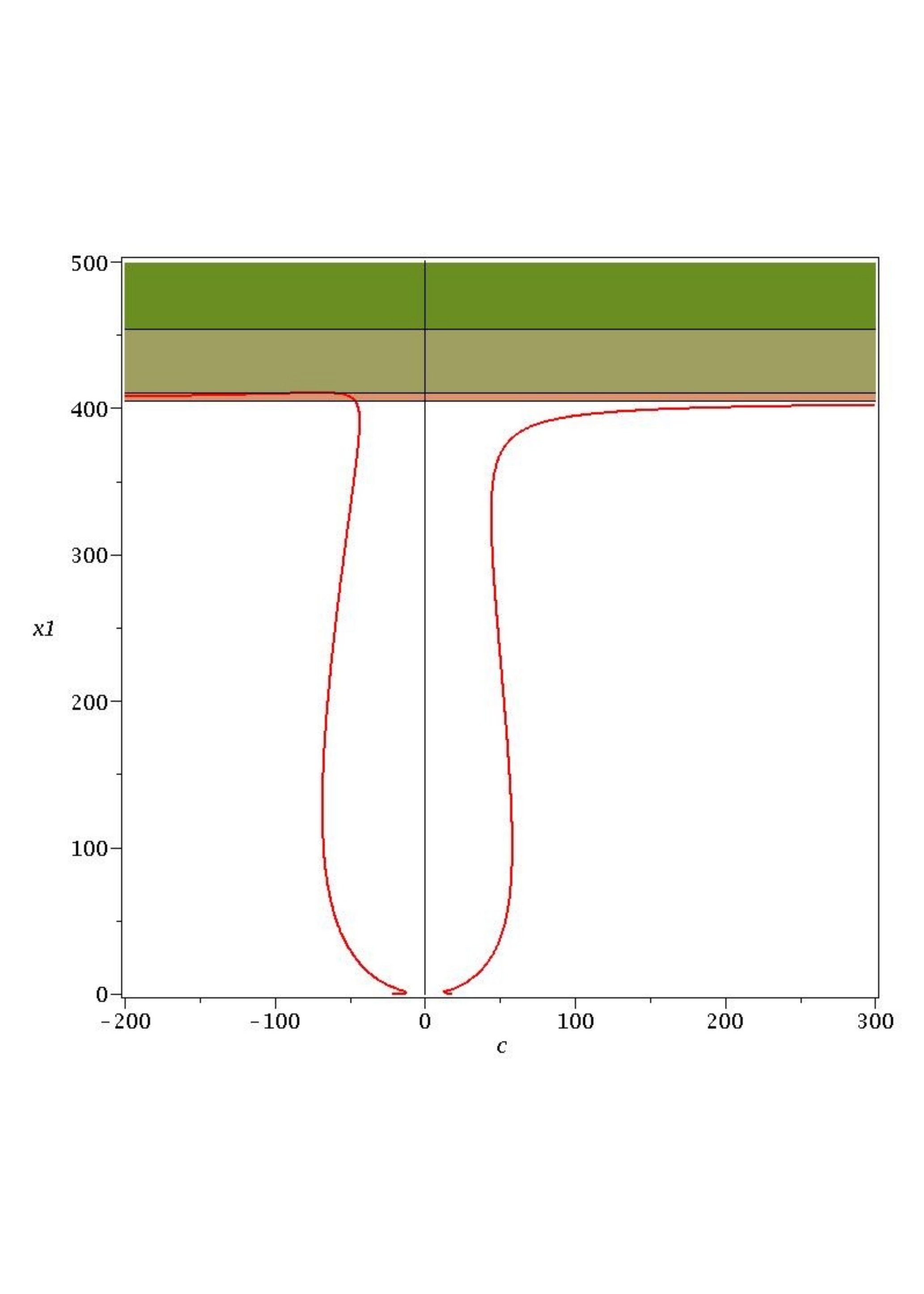}
\end{tabular}
\caption{Plot of the implicit curve $\Cp$ with Maple for the sequential phosphorylation in \cite{markevich}. 
(a): The bounds $500, \alpha_1, \alpha_2,\alpha_3$ for $c > 0, x_1 > 390$, where the intervals 
$[\alpha_{i+1},\alpha_i]$ have different colors. 
(b): The picture  for $-200 < c < 300, x_1 > 0$.}
\label{fig:tang}
\end{figure}

The first part of Theorem~\ref{thm:main} is based on the following well known result, which
follows from the Implicit Function Theorem (IFT). As we haven't found any good reference for its proof, 
we sketch it in the Appendix for the convenience of the reader.

\begin{lemma}\label{lem:IFT}
Let $p=p(c,x_1) \in I$ with positive degree $n$ in $c$ and for any $\beta$ consider the set
$\Cp_{\beta}$ defined in~\eqref{eq:Cgamma}.
Then, the cardinality $\#\Cp_{\beta}$ of $\Cp_{\beta}$ is the same for all $\beta$ 
in a connected component $\Omega$ of the complement of the zeros of the 
resultant $R_n$ in $\R$.
\end{lemma}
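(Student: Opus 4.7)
\smallskip

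\noindent\textbf{Proof plan for Lemma~\ref{lem:IFT}.}
Fix a connected component $\Omega$ of $\R\setminus\{R_n=0\}$. The plan is to show that the integer-valued function $N\colon \Omega\to \mathbb{Z}_{\ge 0}$ defined by $N(\beta):=\#\Cp_\beta$ is locally constant on $\Omega$, hence constant on the connected component.

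First, I would translate the hypothesis using Lemma~\ref{lem:res}: for every $\beta\in\Omega$ one has $p_n(\beta)\neq 0$ and $D_n(\beta)\neq 0$. The first condition says that $p(c,\beta)$, viewed as a polynomial in $c$, has degree exactly $n$; the second says that its $n$ complex roots are pairwise distinct. In particular, every real root of $p(\cdot,\beta)$ is simple, i.e.\ $\frac{\partial p}{\partial c}$ does not vanish there.

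Next, I would fix $\beta_0\in\Omega$ and apply the Implicit Function Theorem at each of the $n$ simple complex roots of $p(c,\beta_0)$. Because $\frac{\partial p}{\partial c}\neq 0$ at each such root, IFT yields, for each root $c_j$, an analytic function $\beta\mapsto c_j(\beta)$ defined in a neighborhood $U_j$ of $\beta_0$ with $c_j(\beta_0)=c_j$ and $p(c_j(\beta),\beta)\equiv 0$. When $c_j$ is real I would use the real version of IFT (available because $p$ has real coefficients), so $c_j(\beta)$ is a real-analytic real-valued function on a real neighborhood; when $c_j$ is non-real I would use the complex IFT and note that by continuity $c_j(\beta)$ stays off the real axis on a small enough neighborhood. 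Shrinking to $U:=\bigcap_j U_j$, the $n$ branches $c_1(\beta),\ldots,c_n(\beta)$ remain distinct. Since $p_n(\beta)\neq 0$ on $U$ keeps the degree equal to $n$, these $n$ branches exhaust \emph{all} complex roots of $p(c,\beta)$ for $\beta\in U$. Therefore, for each $\beta\in U$ the number of real roots equals the number of indices $j$ for which $c_j$ was real, i.e.\ $N(\beta)=N(\beta_0)$.

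The main technical point to keep in mind is that locality alone could in principle be broken by a ``root escaping to infinity'' as $\beta$ varies, but this is exactly what $p_n(\beta)\neq 0$ on $\Omega$ rules out: the leading coefficient stays bounded away from zero on a compact neighborhood, so no root of $p(c,\beta)$ can blow up. With that observation the branches of IFT really do capture every root, and the argument above shows $N$ is locally constant on $\Omega$. Since $\Omega$ is connected, $N$ is constant on $\Omega$, which is the statement of the lemma.
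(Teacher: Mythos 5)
Your proof is correct, and it differs from the paper's in one substantive way. Both arguments establish local constancy of $N(\beta)=\#\Cp_\beta$ on $\Omega$ and both use the Implicit Function Theorem to continue the roots of $p(\cdot,\beta_0)$, so the lower bound $N(\beta)\ge N(\beta_0)$ near $\beta_0$ is obtained identically. The difference is in how the reverse inequality is handled. You track all $n$ \emph{complex} roots at once, using the full strength of Lemma~\ref{lem:res} (on $\Omega$ both $p_n$ and the discriminant $D_n$ are nonzero, so all $n$ complex roots are simple), and then conclude by a counting argument: $n$ distinct branches of a degree-$n$ polynomial must exhaust its roots, real branches stay real and non-real branches stay off the real axis, so no extra real root can appear. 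The paper instead works only with the real roots and rules out the appearance of a new real root by a sequential compactness argument: if $\#\Cp_{\gamma_m}>d$ along $\gamma_m\to\beta$, the extra roots $c_m$ stay bounded because $p_n(\beta)\neq 0$ (the line $x_1=\beta$ is not an asymptote), so a subsequence converges to a real root of $p(\cdot,\beta)$ distinct from the continued branches, contradicting the uniqueness clause of the IFT. Your route buys a cleaner, purely local argument with no limit extraction, at the price of passing to $\C$ and invoking the complex IFT; the paper's route stays entirely real but needs the boundedness/asymptote observation. One small point of care in your write-up: when you shrink to $U=\bigcap_j U_j$ you should also arrange that the \emph{target} neighborhoods of the distinct roots $c_j$ in the $c$-plane are pairwise disjoint, since that is what guarantees the $n$ branches remain distinct and hence account for all $n$ roots.
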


Under the hypotheses of Lemma~\ref{lem:p}, there exists a polynomial $p \in I$ with
positive degree in $x_1$. As we remarked before, unless $x_1$ takes only a finite number of
values, this polynomial will also have positive degree in $c$, which is required
in Theorem~\ref{thm:main}. Indeed, as also $R_n$ is required to be non identically zero,
if the degree of $p$ in $x_1$ is not positive, then $R_n$ would
be a nonzero constant. Therefore, $R_n$ would have no roots and the result is void.

We observe that there is no need to have the exact values of the roots of $R_n$ 
(which are in general impossible to get). It is
enough to find (small) intervals that isolate the roots (say, of radius $\delta$ 
around each $\alpha_j$) and then pick the values $\beta_j$ between the 
extreme points of these intervals.
The bound we get this way is slightly bigger (e.g. $\alpha_k +\delta$), but computable.
On the other side, in order to check the emptiness of $\Cp_{\alpha_i} \cap \ell_1(\mathbb{R}_{\geq 0}^s)$,
there are symbolic procedures available to determine
the number of real roots of zero dimensional ideals subject to real polynomial inequalities, 
for example the libraries for real
roots implemented in Singular~(\citeauthor{singular};\citealp{tobis}). Namely, if $\alpha_i$ is the unique root of $R_n$
in the rational interval $(\xi_1,\xi_2)$, then one needs to check that there are no real solutions $c$
satisfying the conditions
$$ R_n(x_1) = p(c,x_1)=0, \, \xi_1 < x_1 < \xi_2, \,  c \in J.$$
Notice also that the bounds in Theorem~\ref{thm:main} hold in principle for fixed values of $b_2, \dots, b_r$, but in theory
one could get (by a variant of Lemma~\ref{lem:p} under natural hypotheses) a polynomial $p$
depending on these parameters (and even on the rate constants). We exemplify this in \S~\ref{ssec:moving}.

Our methods can be adapted, besides mass-action kinetics systems, to standard modelings with 
autonomous rational dynamical systems, like
power law dynamics with integer exponents or Michaelis-Menten kinetics.

\section{Application to an enzymatic network}\label{sec:4}

In this section, we illustrate the use of 
 Theorem~\ref{thm:main} to find nontrivial bounds in example~\eqref{eq:red} from
\cite{markevich}, which models an enzymatic network with sequential phosphorylations and dephosphorylations. 
We also use this example to explain the need for the hypotheses and the scope of
Theorem~\ref{thm:main}. We moreover use the tools presented in Section~\ref{sec:results} to get 
a more detailed study of the system.

\subsection{The equations}\label{ssec:eq}
We name the species concentrations in network~\eqref{eq:red} by
{\small
$$\begin{array}{llll}
 x_1 \leftrightarrow [\text{Mpp}] , &x_4 \leftrightarrow [\text{M-MAPKK}] , &x_6 \leftrightarrow [\text{Mpp-MKP3}] , &x_{10} \leftrightarrow [\text{MAPKK}],\\
 x_2 \leftrightarrow [\text{Mp}] , &x_5 \leftrightarrow [\text{Mp-MAPKK}] , & x_7 \leftrightarrow [\text{Mp-MKP3}] , &x_{11} \leftrightarrow [\text{MKP3}].\\
 x_3 \leftrightarrow [\text{M}] , & & x_8 \leftrightarrow [\text{Mp-MKP3}^*] ,&\\
 & & x_9 \leftrightarrow [\text{M-MKP3}] , &\\
\end{array}$$}
Then, the differential equations of the system under mass--action kinetics are:
{\small
\begin{align*}
f_1 & = k_4x_5-h_1x_1x_{11}+h_{-1}x_6\\
f_2 & = k_2x_4-k_3x_2x_{10}+k_{-3}x_5+h_3x_7-(h_{-3}+h_4)x_2x_{11}+h_{-4}x_8\\
f_3 & = -k_1x_3x_{10}+k_{-1}x_4-h_{-6}x_3x_{11}+h_6x_9\\
f_4 & = k_1x_3x_{10}-(k_{-1}+k_2)x_4\\
f_5 & = k_3x_2x_{10}-(k_{-3}+k_4)x_5\\
f_6 & = h_1x_1x_{11}-(h_{-1}+h_2)x_6\\
f_7 & = h_2x_6-h_3x_7+h_{-3}x_2x_{11}\\
f_8 & = h_4x_2x_{11}-(h_{-4}+h_5)x_8\\
f_9 & = h_5x_8-h_6x_9+h_{-6}x_3x_{11}\\
f_{10} & = -k_1x_3x_{10}+(k_{-1}+k_2)x_4-k_3x_2x_{10}+(k_{-3}+k_4)x_5\\
f_{11} & = -h_1x_1x_{11}+h_{-1}x_6+h_3x_7-(h_{-3}+h_4)x_2x_{11}+h_{-4}x_8+h_6x_9-h_{-6}x_3x_{11},
\end{align*}}
and the conservation relations can be given as:
{\small
\begin{align*}
L_1=x_4+x_5+x_{10} -\text{MAPKK}_{tot} &=0\\
L_2=x_1+x_2+x_3+x_4+x_5+x_6+x_7+x_8+x_9  -\text{M}_{tot} &=0\\
L_3=x_6+x_7+x_8+x_9+x_{11} -\text{MKP3}_{tot} &=0.
\end{align*}}
We set the reaction constants as in the SI in \cite{markevich}:

\noindent $k_1= 0.02, k_{-1}=1, k_2=0.01, k_3=0.032, k_{-3}=1, k_4=15$, $h_1=0.045, h_{-1}=1, h_2=0.092, h_3=1$, 
 $h_{-3}=0.01,  h_4=0.01, h_{-4}=1, h_5=0.5,  h_6=0.086, h_{-6}=0.0011$,
and fix $\text{M}_{tot}=500, \text{MKP3}_{tot}=100$.
We let
\begin{align}\label{eq:elle1}
 \ell_1 &=x_4+x_5+x_{10},\\
 \nonumber \ell_2 &=x_1+x_2+x_3+x_4+x_5+x_6+x_7+x_8+x_9,\\
 \nonumber \ell_3 &=x_6+x_7+x_8+x_9+x_{11}.
\end{align}
Denote by $I$ the ideal generated by the polynomials $f_1, f_2, \dots, f_s, \ell_1-c, \ell_2-500, \ell_3-100$. 

\subsection{The implicit dose-response curve associated to $x_1$ and MAPKK$_{tot}$}\label{ssec:x1c}
We first take the output $x_1:=$[M$_{pp}$] and the input $c:=$MAPKK$_{tot}$. Note that the trivial 
bound along trajectories is equal to
$\text{M}_{tot}=500$. 

Via Gr\"obner basis elimination methods in \citeauthor{singular} we find that the intersection of $I$ with the ring of 
polynomials in the variables $x_1$ and $c$ is generated by the following  polynomial $p=p(c,x_1)=\sum_{i=0}^4 p_i(x_1) c^i$
with degree $n=4$  in  $c$, with coefficients:
\begin{small}
\begin{align*}
p_4 & = 259578228128346056201372100 x_1^4 - 91228131699664084594014546000 x_1^3 \\
 & - 5318853461888966748775026000000 x_1^2 - 107717641535472295661334000000000 x_1 \\
 & - 983693913810151954410000000000000,\\
p_3 & = -1279181837636260017061541940 x_1^5 + 217225713953041585784715122400 x_1^4 \\
 & + 111432561952880309835561787920000 x_1^3 + 4108996025231164151414890560000000 x_1^2 \\
 & + 32909012963892503562524400000000000 x_1, \\
p_2 & = 1651342827133987314483094029 x_1^6 - 57239961872970579411022490540 x_1^5 \\
 & - 172636108121018180634948973020000 x_1^4 - 29157440247951003530589295575600000 x_1^3 \\
 & - 655794481210925030267002164000000000 x_1^2 \\
 & - 3924727591361860067680350000000000000 x_1, \\
p_1 & = -23638737258912336217603357320 x_1^6 + 40121950932074520838137058397200 x_1^5 \\
 & - 10189010265838070554939750993840000 x_1^4 \\
 & - 458846180258284496202210449400000000 x_1^3 \\
 & - 3875235380408791071737337000000000000 x_1^2 \; \text{and}\\
p_0 & = 13225968047392416670218470096400 x_1^6 - 11693689998883687367816615216864000 x_1^5 \\
 & + 2446220546414268358687194986380000000 x_1^4 \\
 & + 67830374851435086233478373420000000000 x_1^3 \\
 & + 440552682490042857644978250000000000000 x_1^2.\\
\end{align*}
\end{small}
It is clear that one does not want to find this polynomial by hand. But once we have it, we can
extract interesting conclusions.

The resultant $R_{4}$ of $p$ and 
$\frac{\partial p}{\partial c}$, thought of as polynomials in $\mathbb{R}[x_1][c]$ of degrees $4$ and $3$, is a 
polynomial in $\mathbb{R}[x_1]$ of degree $38$ with big coefficients. 
$R_4$ has fourteen nonreal roots (of which eight are double roots), 
three negative roots (of which one is a double root), four positive real roots 
(of which two are double roots) and has $x_1=0$ as a root of multiplicity six.
The values of the positive roots are approximately: 
\[\alpha_1\approx 454.01, \; \alpha_2\approx 410.37, \; \alpha_3\approx 404.67, \, \text{and}\; \alpha_4\approx 312.56.\]
Following Theorem~\ref{thm:main}, we can choose for example $\beta_1= 470$, $\beta_2=420$, $\beta_3=405$ and $\beta_4=350$ which 
satisfy the inequalities $\beta_1>\alpha_1>\beta_2>\alpha_2>\beta_3>\alpha_3>\beta_4>\alpha_4$.
We find that $p(\beta_1,c)$ and $p(\beta_2,c)$ have no real roots. 
Since the nonzero coefficients of $\ell_1$ in~\eqref{eq:elle1} are positive, we have $J=[0,+\infty)$, 
and $\Cp_{\alpha_1}\cap J=\Cp_{\alpha_2}\cap J=\emptyset$. This makes $\alpha_2$ a 
nontrivial upper bound for $x_1$ at steady state by the first part of Theorem~\ref{thm:main}, which is sharper 
than the trivial bound $500$. 

The leading coefficient $p_4$ equals $3916521308700$ times the polynomial
{\small
\[(819x_1^2-308940x_1-9100000)(80925216157x_1^2+2085327062000x_1+27600573000000).
\]
}
As the second factor has no real roots, the real roots of $p_4$ are the roots of $819x_1^2-308940x_1-9100000$, which are, 
$\alpha_3$ and another one approximately equal to $-27.46$. 
As $J=[0,+\infty)$, we consider the zeros of $p(0,x_1)$, which are approximately 
$-13.52$, $-11.85$ and $0$. They are clearly less than $\alpha_3$, 
and $p(\alpha_2,c)=0$ for $c\approx -70.4$, which is negative. 
Then, by the second part of Theorem~\ref{thm:main}, 
$\alpha_3$ is a better nontrivial upper bound for $x_1$ at steady state (since $\alpha_3 < \alpha_2 < 500$).

\bigskip

If instead, we consider $x_3$ as output variable, 
by elimination in $I$ of all variables except for $x_3$ and $c$, we obtain a polynomial 
$q(x_3,c) \in I$ with degree $n=4$ in the variable $c$. Its leading coefficient $q_4$ is 
\[ 375791837967 x_3 (890177377727 x_3^2+81709195989640 x_3+3800250418891200).\]
Note that ${q}_4$ has no positive real roots, which makes us unable to apply the second 
part of Theorem~\ref{thm:main}. The resultant $R_{4}$  of $q$ and 
$\frac{\partial q}{\partial c}$, thought of as polynomials in $\mathbb{R}[x_3][c]$ of degrees $4$ and $3$, is a 
polynomial in $\mathbb{R}[x_3]$ of  degree $38$. $R_4$ has only one positive real root which is approximately  
$\alpha_1\approx 440.55$.
We can see that $q(450,c)$ has no real roots. This makes $\alpha_1$ a nontrivial upper bound for 
$x_3$ at steady state by the first part of Theorem~\ref{thm:main}.

\subsection{Depicting the implicit dose-response curve}\label{ssec:depict}

We depict in Figures~\ref{fig1} and~\ref{fig1234} the results we have obtained 
for the sequential dual phosphorylation--dephos\-pho\-ry\-la\-tion cycle from \cite{markevich} 
using the implicit dose-response curve ${\mathcal C} = \{(c,x_1) ~|~ p(c,x_1)=0\}$, plotted with \citeauthor{maple}. 
In Figure~\ref{fig1} we can see the curve 
$\mathcal{C}$ in the positive quadrant (the usual dose-response curve). This curve 
represents the relation between the input $\text{MAPKK}_{tot}$ ($c$) 
and the output Mpp ($x_1$) at steady state when both take positive values. The 
difference between the trivial and the nontrivial bounds is marked with color. 
In Figure~\ref{fig1234}(a) we can see that the nontrivial bound is not an upper bound for 
negative values of $c$, where $\alpha_2$ gives a slightly bigger upper bound. 
Figure~\ref{fig1234}(b) shows the curve $\mathcal{C}$ for 
negative values of $x_1$. 
\begin{figure}[ht]
\begin{center}
\includegraphics[scale=.29,trim=2mm 45mm 2mm 5cm, clip=true]{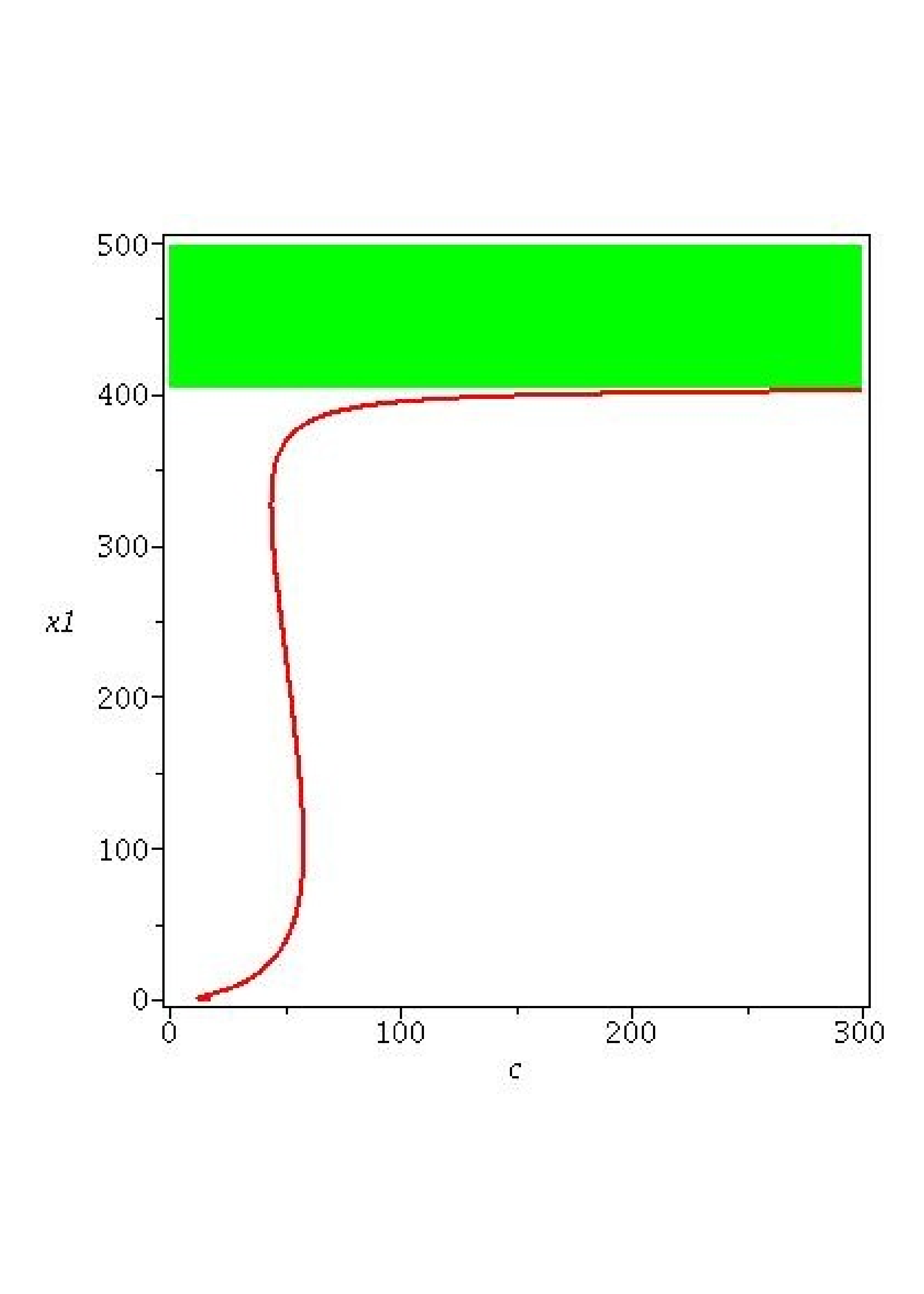}
\end{center}
\caption{The curve $\Cp$ in the positive quadrant. This curve 
represents the relation between the input $\text{MAPKK}_{tot}$ ($c$) 
and the output Mpp ($x_1$) at steady state when both take positive values. The 
difference between the trivial and the improved bound is marked with color.}
\label{fig1}
\end{figure}
\begin{figure}[ht]
\begin{tabular}{cc}
 (a)\includegraphics[scale=.28,trim=2mm 47mm 2mm 5cm, clip=true]{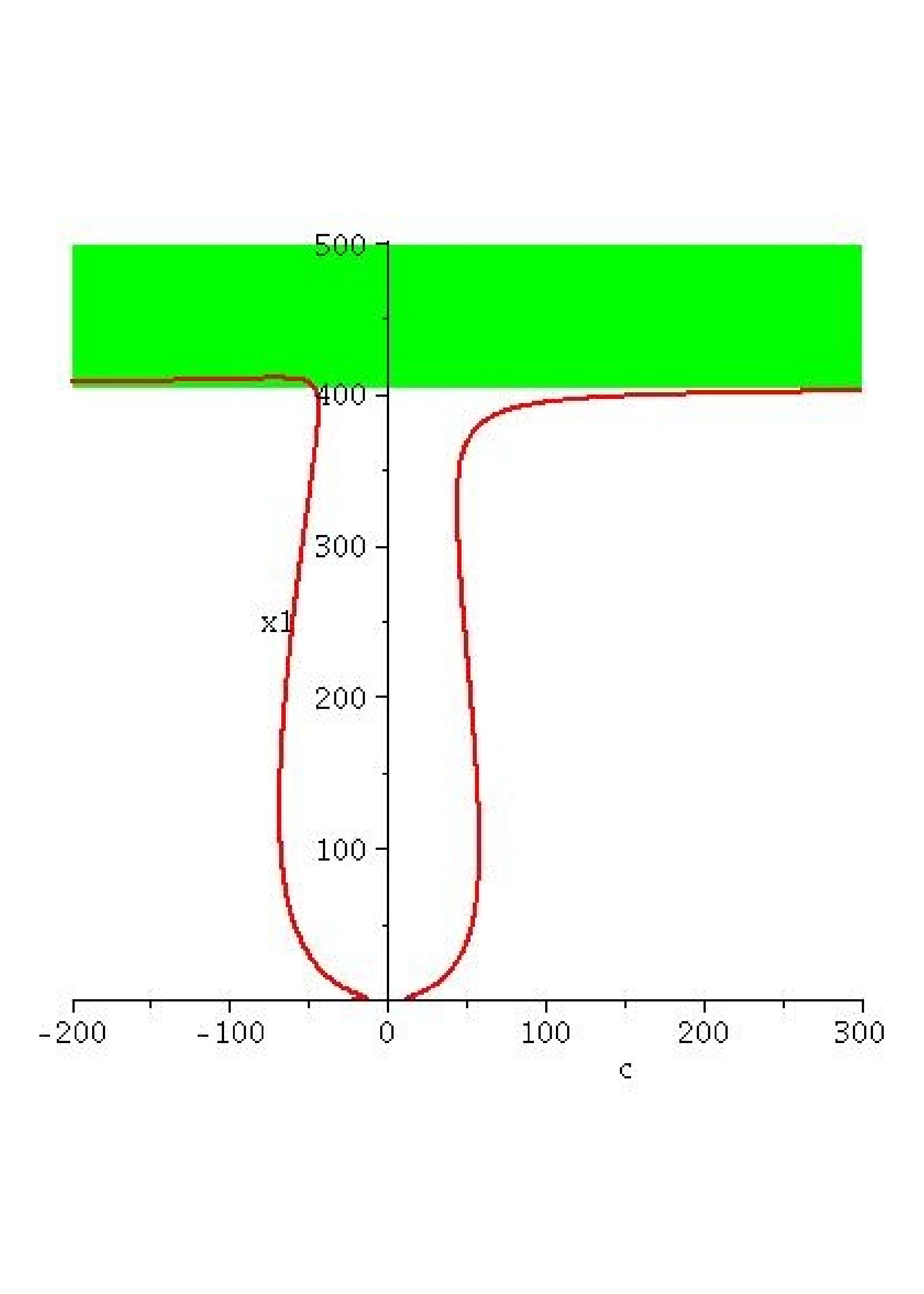}&
 (b)\includegraphics[scale=.22,trim=2mm 47mm 2mm 5cm, clip=true]{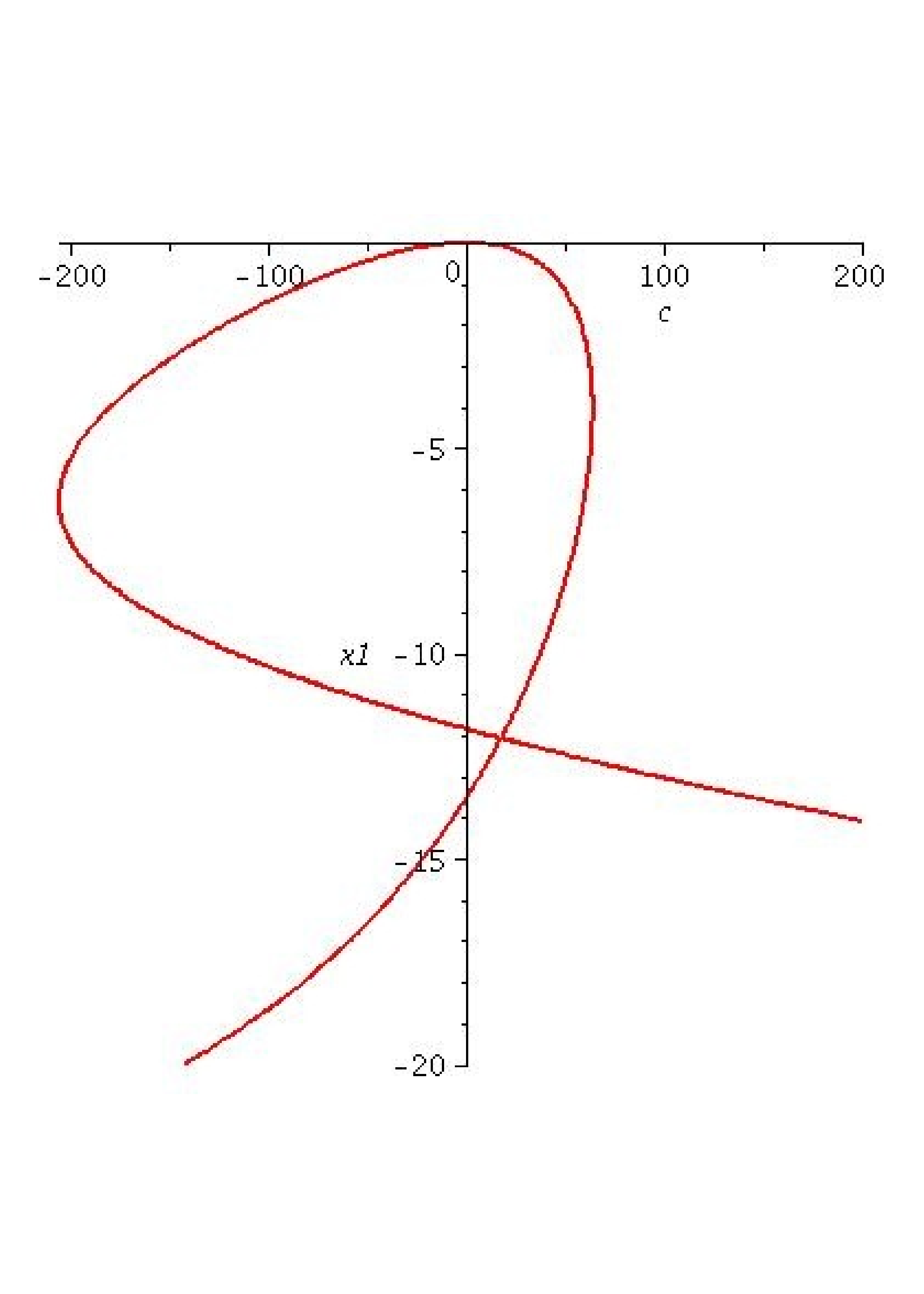}
 \end{tabular}
\caption{The algebraic curve $\Cp$ in different quadrants.}
\label{fig1234}
\end{figure}
Note that for small values $c^*$ there are four real
values of $x_1$ satisfying the degree $4$ polynomial equation $p(c^*,x_1)=0$ and
in a certain range, approximately for $c^*$ in the interval
$(44.43,58.33)$, there are $3$ positive solutions. In fact, the system shows
multistationarity in this range, and the middle values correspond to unstable
steady states.
Figure~\ref{fig:tang1} below presents the differences and similarities between the 
approximate plot of the implicit curve $\Cp$ 
 and the curve featuring hysteresis obtained via numerical ode simulation with \citeauthor{matlab}.
So, the black curve in Figure~\ref{fig:tang1} (a) approximates all the positive real
zeros $(c,x_1)$ of $p$. On the other side,
the curves (b), (c), (d)  are produced as approximate limit values 
via numerical integration of the ode system 
at different initial values. In Figure~\ref{fig:tang1} (c) and (d), the two curves in (b) are 
depicted separately. The initial values for the curve in (c) vary with $c$ and are 
$x_{10}^*=${\small [MAPKK]}$=c$, $x_3^*=${\small[M]}$=500$, $x_{11}^*=${\small [MKP3]}$=100$, 
and the other variables are set to zero. The value of $x_1$ represented is (approximately) the 
equilibrium value (computed for a big enough value of time $t$).
The curve in (d) should be read ``backwards'', starting at $c=100$ with the corresponding equilibrium 
point of the curve in (c) as initial state. At each step, the total amount of {\small [MAPKK]} is reduced 
from the previous equilibrium, keeping the same stoichiometric compatibility class for each value of $c$, 
until $c=0$ is reached. Two ``fake'' traces appear in this usual numerical picture: those that go from the lower
stable values of $x_1$ to the higher stable values of $x_1$ and back, which are produced
by an intent of the plotter to interpolate continuously the solutions of the simulations 
(there are also some inaccuracies due to the numeric approximation). (See the Supplementary Material we
provide for the \citeauthor{matlab} script used to produce Figure~\ref{fig:tang1} (b), (c) and (d).)
Note that the middle unstable steady state values in the black curve in (a) (in the multistationarity range)
are not taken as the initial values, and they do not occur in the blue and green curves in (b).

\begin{figure}[ht]
\begin{tabular}{cc}
 (a)\includegraphics[scale=.21,trim=2mm 44mm 2mm 4cm, clip=true]{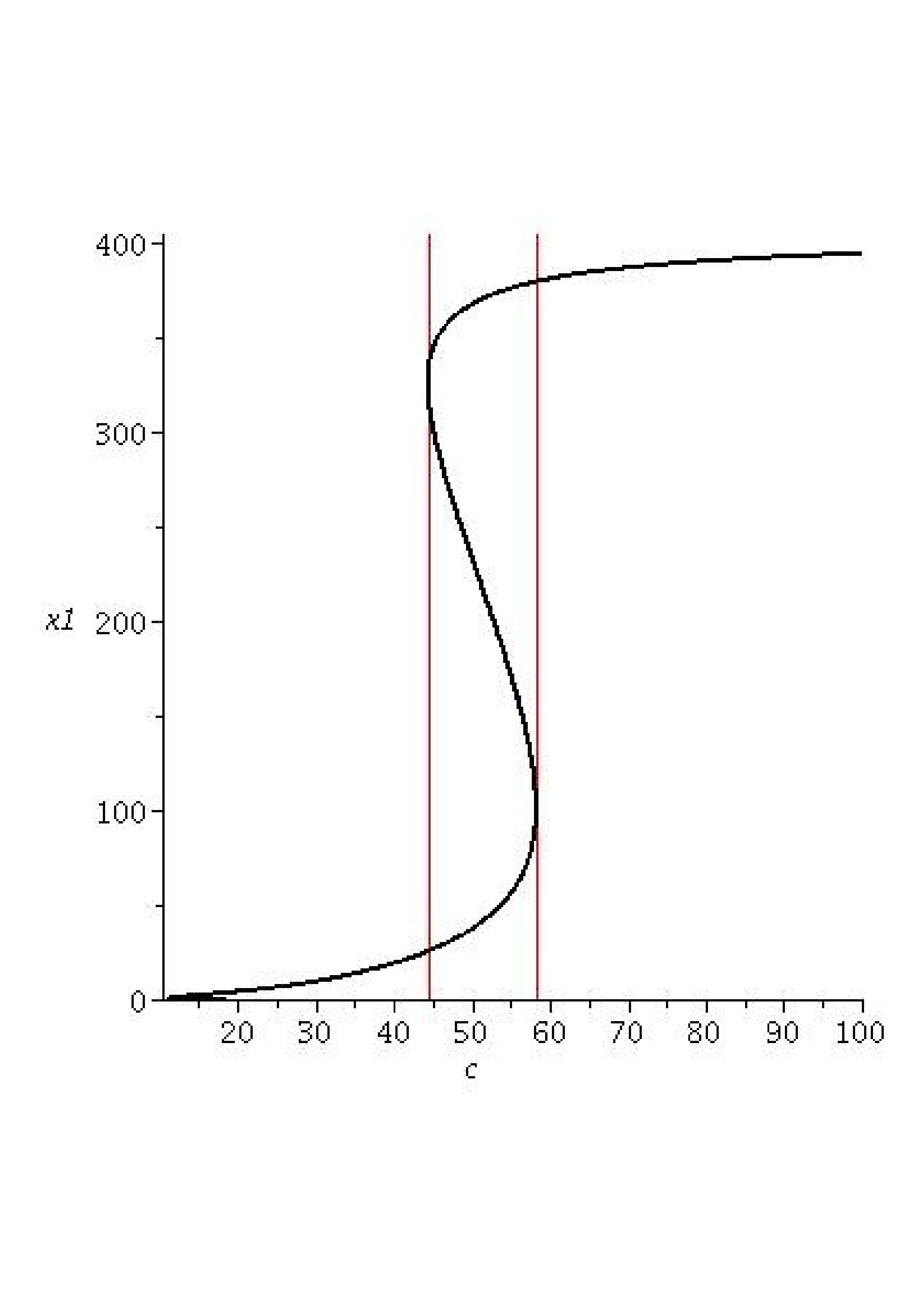} &  
 (b)\includegraphics[scale=.3,trim=2cm 7cm 2cm 7cm, clip=true]{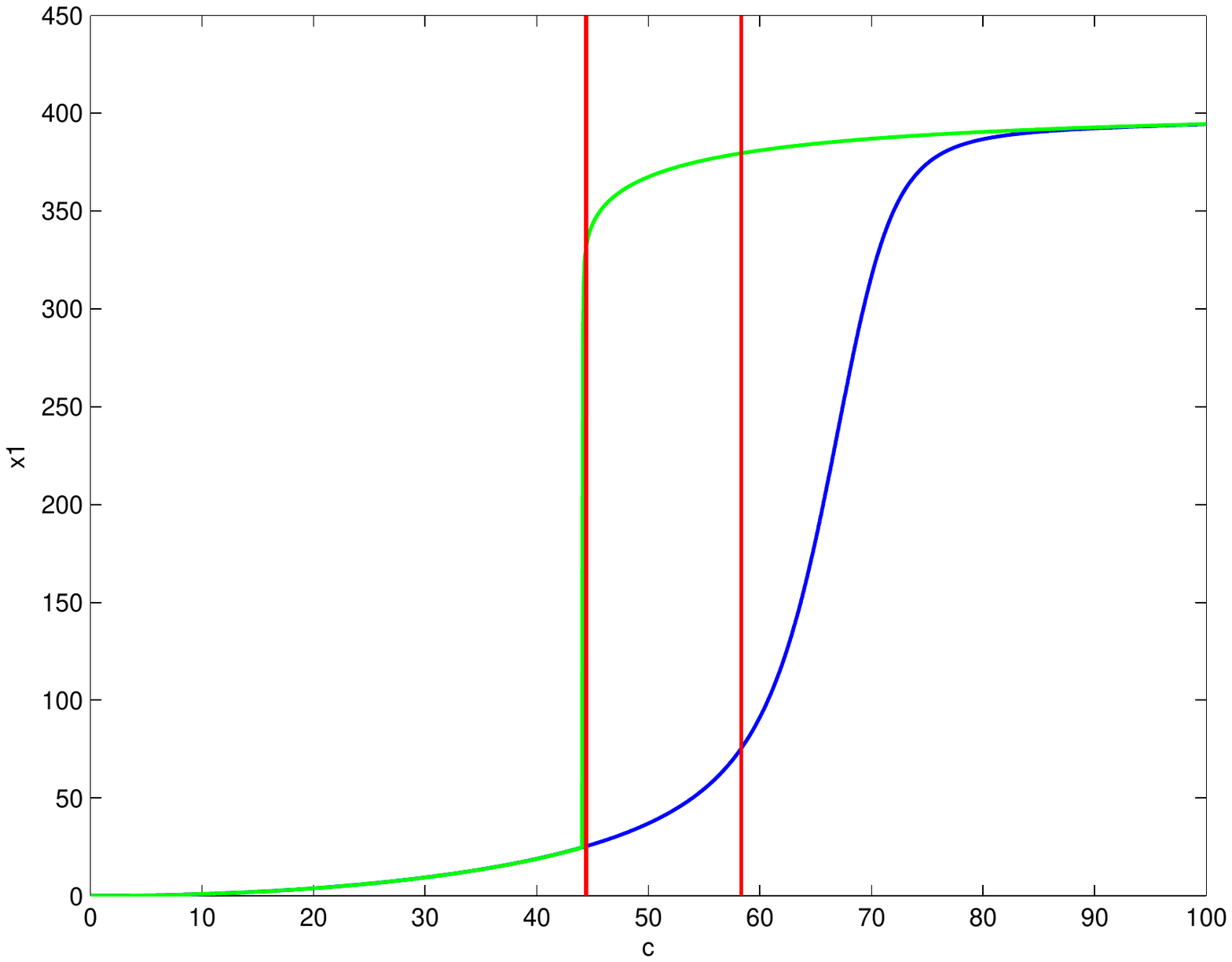} \\
 (c)\includegraphics[scale=.3,trim=2cm 7cm 2cm 7cm, clip=true]{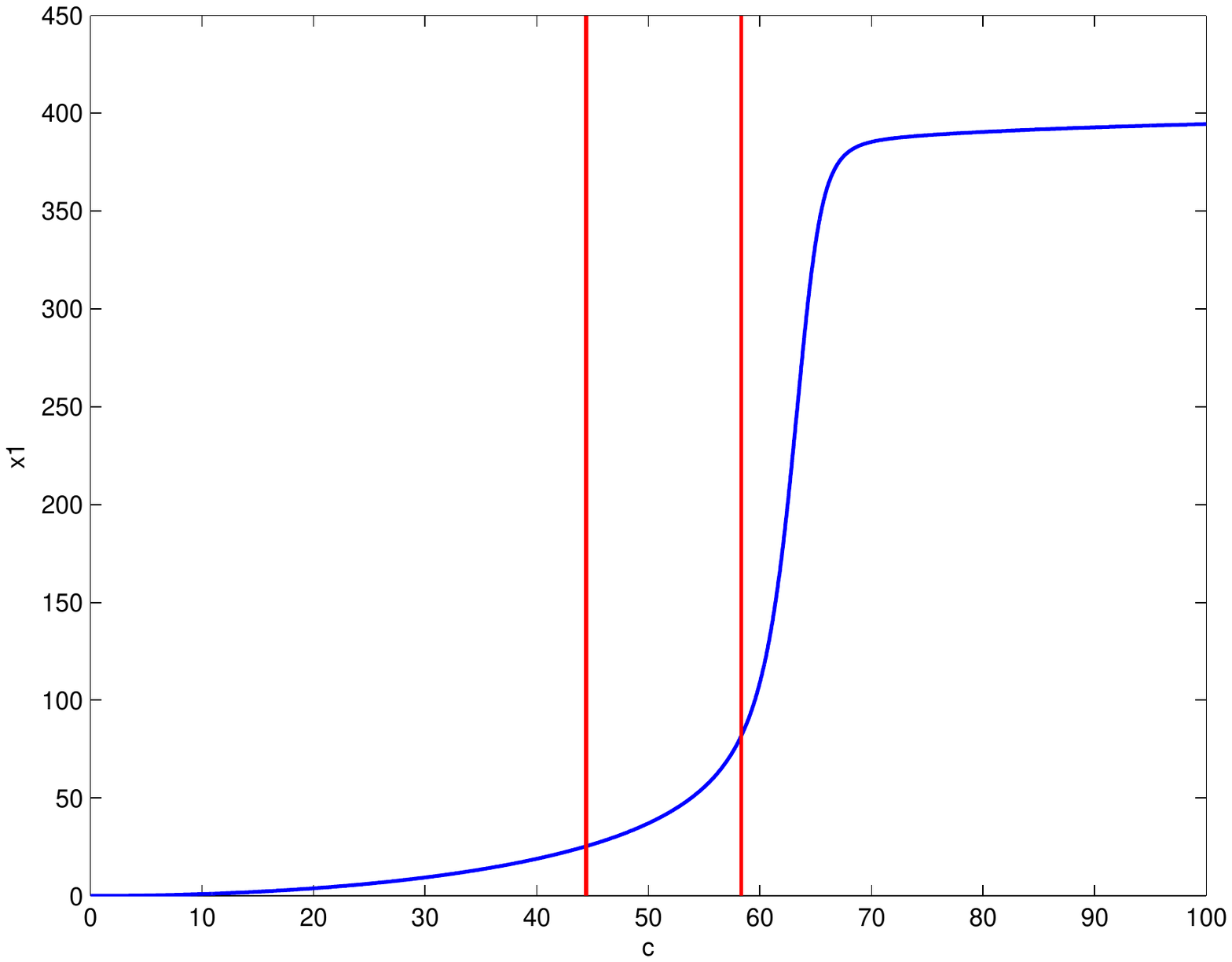} &  
 (d)\includegraphics[scale=.3,trim=2cm 7cm 2cm 7cm, clip=true]{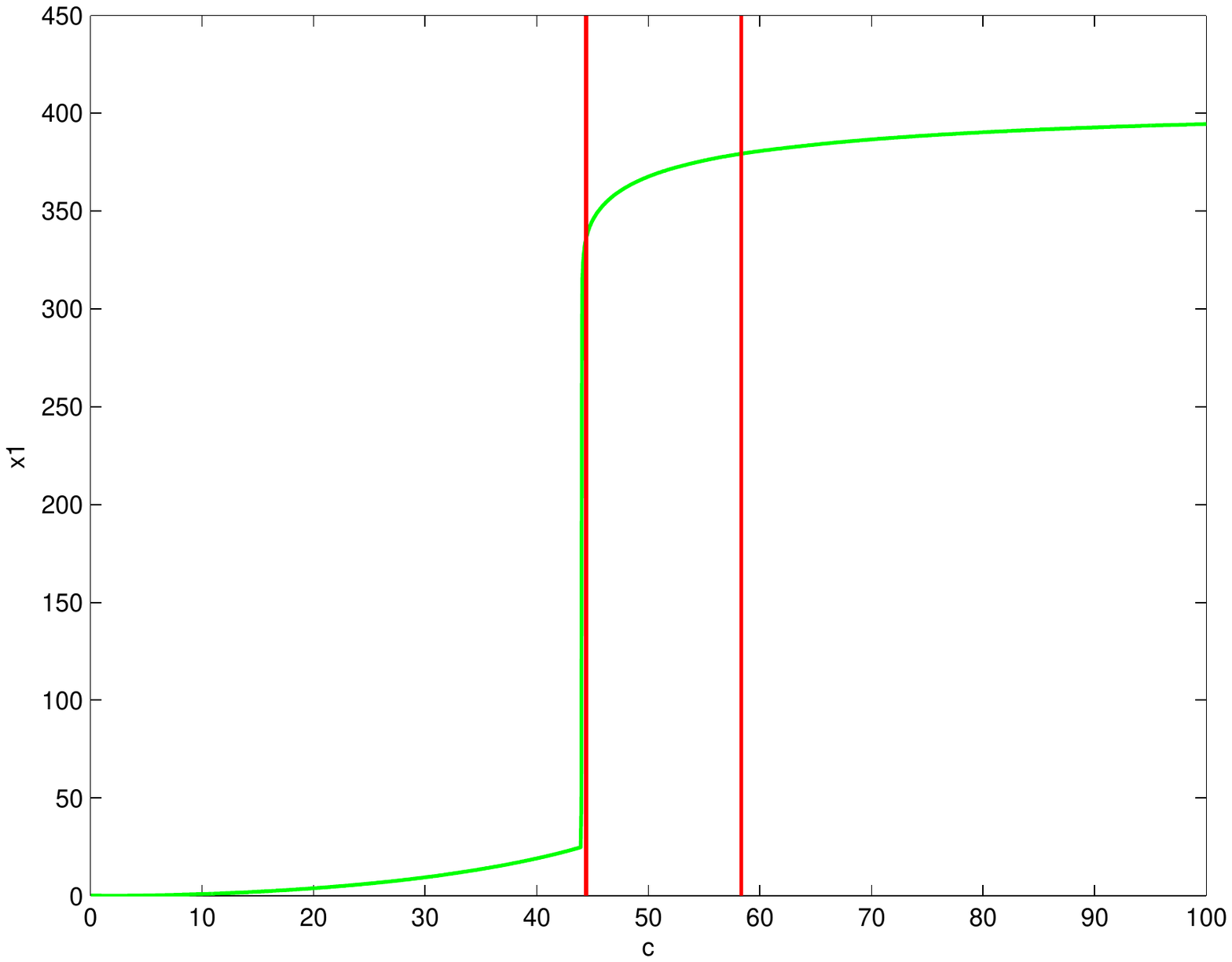}
\end{tabular}
\caption{The dose-response curves in the positive orthant with the vertical lines $c=44.43$ and $c=58.33$. 
(a): The plot of the implicit curve $\Cp$ in black with Maple. 
(b): The standard hysteresis simulation diagram with MATLAB, which is the superposition of the curves
in (c) and (d).
(c): The blue curve in (b) from the lower steady state values of $x_1$ to the higher steady state values. 
(d): The green curve in (b) from the higher steady state values of $x_1$ to the lower steady state values.}
\label{fig:tang1}
\end{figure}

\subsection{Taking $x_2$ as output variable}\label{ssec:x2c}

If we now follow the same procedure but eliminating all variables except for $x_2$ and $c$, we obtain a polynomial 
$q(x_2,c)$, again with degree $4$ in the variable $c$. 
The resultant $R_{4}$ of $q$ and 
$\frac{\partial q}{\partial c}$, thought of as polynomials in $\mathbb{R}[x_2][c]$, is a 
polynomial in $\mathbb{R}[x_2]$ of  degree $38$
with only six positive real roots.
The values of these positive roots are approximately 
$\alpha_1 \approx 15.2$,  $\alpha_2\approx 8.49$, $\alpha_3\approx 3.02$,
$\alpha_4\approx 2.1347$, $\alpha_5 \approx 2.1345$, and $\alpha_6\approx 2.1263$.

As $q(20,c)$ has no real roots, $\alpha_1$ is a nontrivial upper bound for 
$x_2$ at steady state by the first part of Theorem~\ref{thm:main}. To use the second 
part of this theorem, we must focus on the roots of the leading coefficient in $c$, 
which has no positive root. Hence, the only nontrivial bound we can find is $\alpha_1\approx 15.2$.

The corresponding implicit dose-response curve $\Cp'=\{(c,x_2) | q(c,x_2)=0\}$ has the unexpected
shape featured in Figure~\ref{fig:tang2}(a). For $c^*$ in the same approximate range $(44.43,58.33)$,
there are more than one positive solutions $x_2$ to the polynomial equation $q(c^*,x_2)=0$. The higher
values correspond to unstable steady states. The lower values values cannot be completed to a
nonnegative steady state, as we now explain with the help of Figure~\ref{fig1250}. We remark that
those points do not lie on a line, as the approximate picture seems to show.
\begin{figure}[ht]
\centering
\includegraphics[scale=.25,trim=2mm 46mm 2mm 46mm, clip=true]{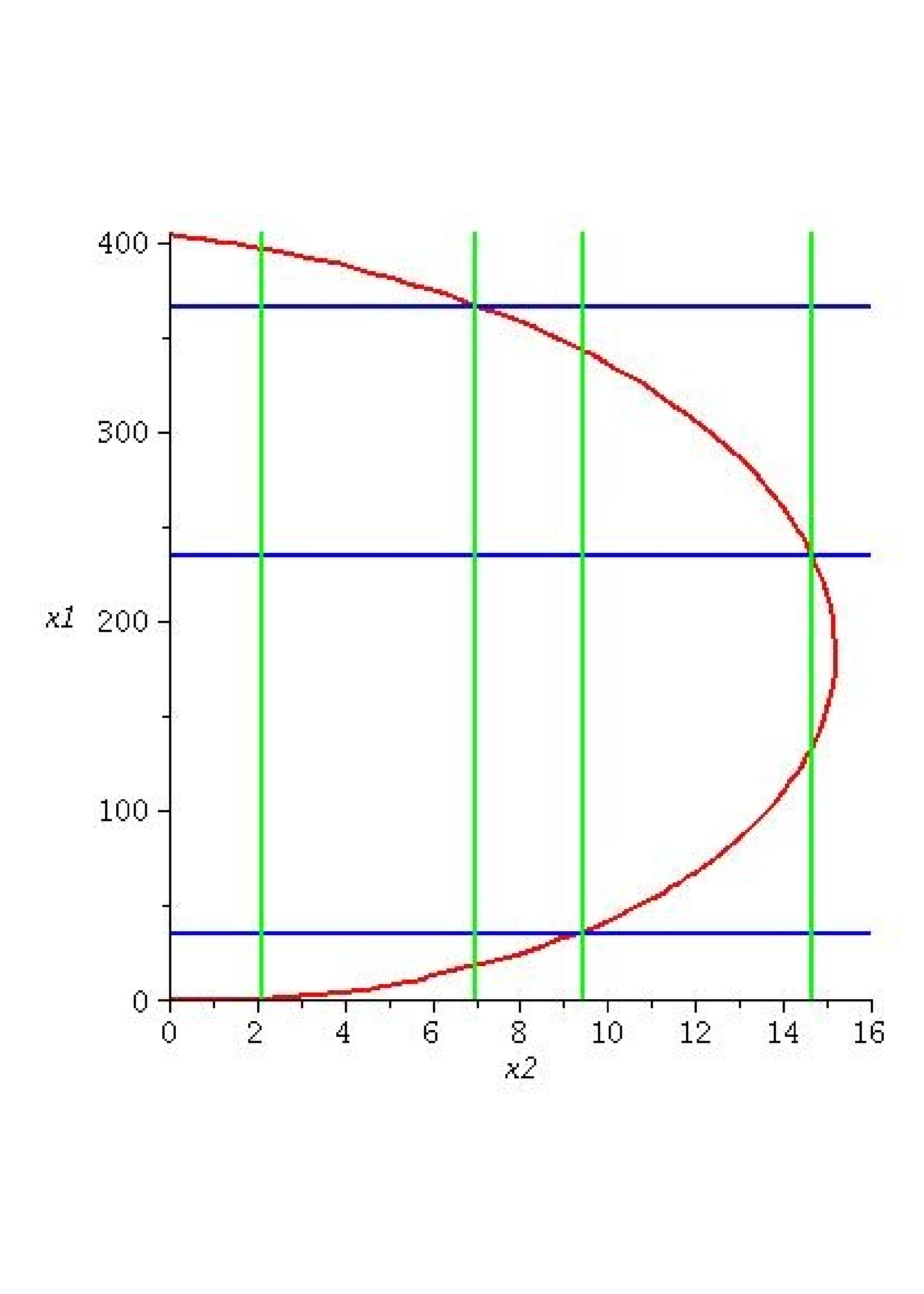}
\caption{The curve $g(x_1,x_2)=0$ (red), the horizontal lines that represent $p(50,x_1)=0$ 
and the vertical lines that represent $q(50,x_2)=0$, in the positive orthant}
\label{fig1250}
\end{figure}
We can instead eliminate from $I$ all variables (including $c$) but $x_1$ and $x_2$. We computed
with Singular a nonzero polynomial $g=g(x_1, x_2) \in I$, which relates the values of $x_1$
and $x_2$ at steady state for any value of $c$.  We pick the value $c^*=50$ in the
multistationarity range. Figure~\ref{fig1250} depicts:
\begin{itemize}
 \item[$\bullet$] the (approximate) curve $\{g=0\}$,
 \item[$\bullet$] the horizontal lines defined by $p(50,x_1)=0$,
 \item[$\bullet$] the vertical lines defined by $q(50,x_2)=0$,
\end{itemize}
in the nonnegative orthant of the $(x_2,x_1)$-plane. For any steady state for which $c=50$,
the values of its second and first coordinates have to be in the intersection of these
three pictures. We see that no point with $x_2$ close to $2$ satisfies this property.

One can now guess the shape of the hysteresis simulation diagram (produced with MATLAB), which is
shown in Figure~\ref{fig:tang2}(b). Again, it is interesting to detect the ``fake'' traces
in the plot, when comparing with the implicit dose-response curve $\Cp'$ in Figure~\ref{fig:tang2}(a).
\begin{figure}[ht]
\begin{tabular}{cc}
 (a)\includegraphics[scale=.2,trim=2mm 46mm 2mm 46mm, clip=true]{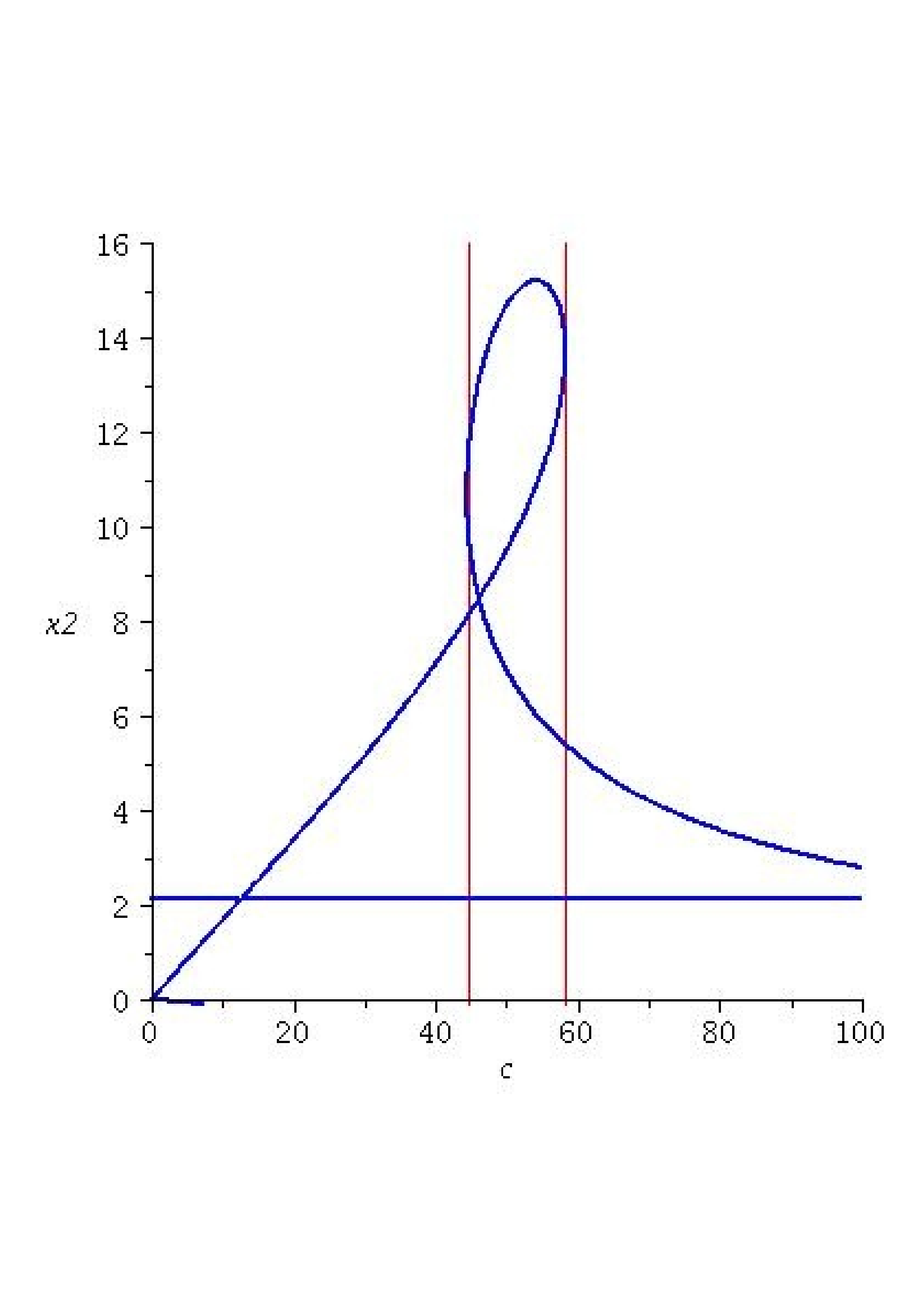}& 
 (b)\includegraphics[scale=.3,trim=2cm 7cm 2cm 7cm, clip=true]{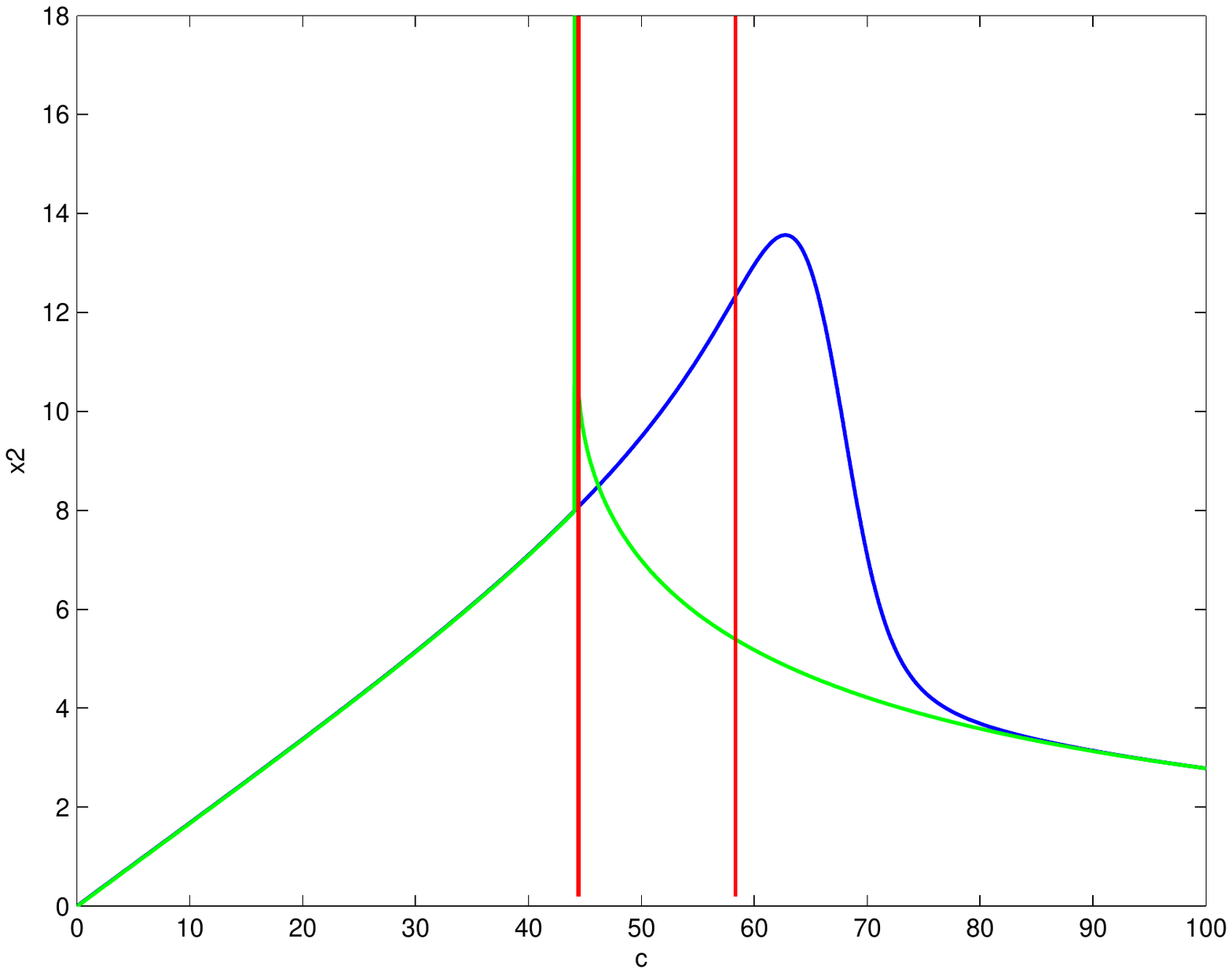}
\end{tabular}
\caption{The curve $q(c,x_2)=0$ in the positive orthant with the lines $c=44.43$ and $c=58.33$ in red. 
(a): The plot of the implicit curve with Maple. 
(b): Simulation with MATLAB. }
\label{fig:tang2}
\end{figure}

\subsection{Moving the parameters}\label{ssec:moving}

As we remarked in Section~\ref{sec:results}, by a variant of Lemma~\ref{lem:p}, one could get 
a polynomial $p$ in the ideal $I$ depending on some of the parameters $b_2, \dots, b_r$ or some rate 
constants. In theory, this is possible. In practice, even if one could compute $p$ effectively,
the output might be too big to be understood. For our running example, we also consider as
variables (besides
the $x_i$'s) the following: $c=b_1$ (MAPKK$_{tot}$), 
  $b_2 $ (M$_{tot}$), $b_3$ (MKP3$_{tot}$), 
and $h_1$. 

In this case, we can compute a polynomial
$p=p(c, b_2, b_3, h_1, x_1)$ in the parametric steady state ideal 
(considered in $\R[c,b_2,b_3,h_1,x_1,\dots,x_{11}]$) 
of total degree $13$, degree $6$ in $x_1$ and degree $4$ in $c$. 
The coefficient $p_4(b_2, b_3, h_1, x_1)$ of $c^4$ in $p$ equals:  
{\small                                                                                                
\[( \eta_1+\eta_2 x_1 h_1+ \eta_3 x_1^2 h_1^2)(4095 x_1^2 h_1+4118 x_1 h_1 b_3-4095 x_1 h_1 b_2+4095 x_1-4095 b_2),\]}
with {\small $\eta_1=55891160325>0$},  {\small $\eta_2=93839717790>0$}, and {\small $\eta_3=80925216157>0$}. 
The biggest positive asymptote $x_1 = \alpha$ is given by the only positive root $\alpha$ of the right factor, which is
easily seen to be smaller than the trivial bound $b_2$. 
Moreover, the difference $b_2 - \alpha$ is the following positive quantity:
\begin{equation}\label{eq:b2}
 b_2 - \alpha= \dfrac{\mu -\sqrt{\mu^2-4 \nu h_1^2 b_2 b_3}} {2 h_1},
\end{equation}
where $\mu = 1 + h_1 (b_2 +\nu b_3)$ and $\nu = \frac{4118}{4095}$.  
For fixed $b_2$ and $b_3$,  we can see for instance that $\alpha$ tends to the trivial bound
$b_2=\text{M}_{tot}$ when the dephosphorylation reaction constant  $h_1$ tends to zero, as expected.
In general, one can try to perform the computations keeping a few relevant parameters as variables,
to get a precise implicit description of the dependence of the steady state values on these parameters.

\section{DISCUSSION} 

We have introduced a novel approach for the study of dose-response curves, that is, for the
relation between steady state coordinates and input variables in autonomous polynomial dynamical
systems, via implicit curves. This analysis is possible regardless the absence
of explicit expressions or the presence of multistationarity and gives explicitly the implicit
relations between the input and the output.

As an application, we made a thorough study of one of the enzymatic mechanisms in~\cite{markevich},
where we obtained nontrivial bounds at steady state. We 
also used this example to point out how to understand the usual pictures featuring hysteresis and
to show that the implicit curves can be too difficult to be obtained ``by hand'' but they
can nevertheless be used to extract interesting conclusions on the behavior of the system.

\section*{Appendix}\label{ap:A}

We include the proofs of Lemma~\ref{lem:p}, Lemma~\ref{lem:IFT} and Theorem~\ref{thm:main}.

\begin{proof}[Proof of Lemma~\ref{lem:p}]
The proof uses basic results on the dimension of algebraic varieties, that can be found for instance
in \cite[Chapter~1,\S~6]{shafa}. As we remarked after the statement of Lemma~\ref{lem:p}, the ideal
$I$ can be generated by $s$ polynomials in $s+1$ variables, and so its dimension $d$ is at least $1$.
Consider the projection map $\pi(c,x_1,\dots,x_s)=c$ from the variety $V(I)$ of zeros of $I$
in $\C^{s+1}$. If there exists a nonzero polynomial $q \in I \cap \R[c]$,
the image of $V(I)$ would be contained in the zero set of $q$ in $\C$ and would be therefore of
dimension $0$.
By hypothesis, the fiber over any of those points has also dimension $0$ but then from the
Fibre Dimension Theorem we would get $0 \ge d-0 \ge 1$, a contradiction. Therefore,
the image is dense in $\C$ and so its closure has dimension $1$. Again, by the same Theorem,
we deduce that the dimension of $I$ equals $1$.
This implies that, given $2$ (or more) 
variables, it is possible to find a nonzero polynomial in those variables in the ideal. In particular, 
we can find a nonzero polynomial $p=p(c,x_1)$ in $I\cap\R[c,x_1]$ with positive degree in $x_1$.
\end{proof}

\begin{proof}[Proof of Lemma~\ref{lem:IFT}]
 It is enough to show that for any 
$\beta \in \Omega$ there exists a neighborhood where the cardinality is constant.
By Lemma~\ref{lem:res}, we have that $p_n(\beta) \neq 0$ and
 $\frac{\partial p}{\partial c}(c,\beta)\neq 0$ for all $c$ such that $p(c,\beta)=0$.
Let $d = \#\Cp_{\beta}$ and
call $\Cp_{\beta}=\{c^1,\dots,c^d\}$.
Using the IFT, it is possible to find an open set $V$ around $\beta$ and, 
for $1\leq i\leq d$, open sets  $U_i$ around each $c^i$ with 
 $U_i\cap U_j=\emptyset$ if $i\neq j$, 
and smooth functions $g_i:V\to U_i$ in $\mathcal{C}^1$ such that
$$\{(g_i(x_1),x_1)~|~ x_1\in V\}=\{(c,x_1) \in U_i\times V ~|~ p(c,x_1)=0\}.$$
Thus, for all $x_1 \in V$ we have that $\#\Cp_{x_1} \ge d$.

 Suppose there exists a sequence $\gamma_m \to \beta$ for $m \to \infty$ with $\gamma_m\in V$ and
 $\#\Cp_{\gamma_m} > d$. For every $m$, choose a point $c_m$ with $p(c_m, \gamma_m)=0$
 and $c_m \neq g_i(\gamma_m)$ for all $i=1, \dots,d$.
 Since $x_1=\beta$ is not an asymptote, the sequence $(c_m)$ is bounded and thus there
 exists a convergent subsequence of $((c_m, \gamma_m))$ which converges to a point $(c^*,\beta)$.
 But then $p(c^*, \beta)=0$ and $c^*$ is different from $c^1, \dots,c^d$, a contradiction.
 \end{proof}

\begin{proof}[Proof of Theorem~\ref{thm:main}]

 Let $\alpha_1> \alpha_2 > \dots> \alpha_m$ be the real zeros of $R_{n}$, 
 and $k \in \{1,\dots,m\}$, $\beta_1, \dots, \beta_k$ be as in the hypotheses 
 of Theorem~\ref{thm:main}. The open intervals $(\alpha_i,\alpha_{i-1})$ for 
 $1\leq i\leq k$ and $\alpha_0:=+\infty$ are 
 connected components of the complement of the zeros of $R_n$. By Lemma~\ref{lem:IFT}, 
 $\#\Cp_{x_1}=\#\Cp_{\beta_i}=0$ for all $x_1 \in (\alpha_i,\alpha_{i-1})$ for $1\leq i \leq k$.
 There are no zeros of $p$ with $x_1=\alpha_i$ ($1\leq i\leq k$) because 
 $\Cp_{\alpha_i} \cap J = \emptyset$ by hypothesis,
 which means that there are no nonnegative steady states with $x_1=\alpha_i$. 
 Therefore, we have $x_1< \alpha_k$ at any nonnegative steady state. This is, $\alpha_k$ is an upper bound 
 for $x_1$ at steady state.
 
 Let $\alpha$ be as in the second part of Theorem~\ref{thm:main}. Denote by $X$ the set 
 $X:=\{x_1>\alpha : \textrm{there exists}\; c \in J \; \text{with} \; p(c,x_1)=0\}$. 
 By the first part of the theorem, we have $x_1<  \alpha_k$ for all $x_1$ in $X$. 
 Suppose $X\neq \emptyset$, and let us call $\mu$ the supremum of $X$.
 If $x_1=\mu$ were an asymptote, we would have $p_n(\mu)=0$, which is not possible because $\mu>\alpha$. 
 Then, for any sequence $((c^{(m)},x_1^{(m)}))\subset \Cp$ with $x_1^{(m)}\in X$, $x_1^{(m)}\to \mu$, 
 and $c^{(m)}\in J$, there exists a convergent subsequence such that the first coordinates tend to 
 some $c^*\in J$ ($J$ is closed). Since $p$ is continuous, 
 we have $p(c^*,\mu)=0$, and by hypothesis, as $\alpha<\mu\leq\alpha_k$, $\frac{\partial p}{\partial c}(\mu,c^*)\neq 0$. 
 Then, by the IFT, there exist $\delta>0$ and a smooth function $g$ such that $p(g(x_1),x_1)=0$ for all 
 $x_1\in(\mu,\mu+\delta)$. If $J=\R$, this is impossible because $\mu$ is the supremum. If $J=[0,+\infty)$, then 
 $\mu$ is a maximum and $c^*=0$, but this is not possible by hypothesis, since $p(0,x_1)\neq 0$ for all 
 $x_1>\alpha$. Therefore, $X=\emptyset$ and $x_1\leq \alpha$ at every nonnegative steady state.
 
\end{proof}

\end{document}